\newcommand{\email}[1]{\url{#1}}
\newtheorem{theorem}{Theorem}[section]
\newtheorem{lemma}[theorem]{Lemma}
\newtheorem{claim}[theorem]{Claim}
\newtheorem{proposition}[theorem]{Proposition}
\newtheorem{definition}[theorem]{Definition}
\newtheorem{example}[theorem]{Example}
\begin{document}

\title{A Path Order for Rewrite Systems that Compute Exponential Time Functions%
\thanks{This research is supported by FWF (Austrian Science Fund) projects P20133-N15.}%
\\\small{Technical Report}}

\author{Martin Avanzini\\
Institute of Computer Science\\
University of Innsbruck, Austria\\
\url{martin.avanzini@uibk.ac.at}
\and
Naohi Eguchi\\
School of Information Science\\
Japan Advanced Institute of Science and Technology, Japan\\
\url{n-eguchi@jaist.ac.jp}
\and 
Georg Moser\\
Institute of Computer Science\\
University of Innsbruck, Austria\\
\url{georg.moser@uibk.ac.at}}
\maketitle

\begin{abstract}
In this paper we  present a  new  path order for 
rewrite systems,  the \emph{exponential path order} \EPOSTAR.  
Suppose a term rewrite system $\RS$ is  compatible  with \EPOSTAR,  
then  the  runtime  complexity of $\RS$ is bounded from  above by an exponential  function. 
Further, the  class  of  function  computed  by  a  rewrite  system
compatible with \EPOSTAR\ equals  the class of functions computable in
exponential time on  a Turing machine. 
\end{abstract}

\newpage
\tableofcontents
\newpage

\section{Introduction}\label{s:intro}

In this paper we are concerned with the complexity analysis of term rewrite
systems (TRSs) and the ramifications of such an analysis in implicit computational
complexity (ICC for short).

Term rewriting is a conceptually simple but powerful abstract model of
computation that underlies much of declarative programming. 
In rewriting, proving termination is an important research field. 
Powerful methods have been introduced to establish 
termination of TRSs (see~\cite{BN98} for an overview).
In order to assess the complexity of a (terminating) TRS
it is natural to look at the maximal length of derivations. 
More precisely in~\cite{HL89} the \emph{derivational complexity} of a TRS
is studied, where the derivational complexity function relates the 
length of a longest derivation sequence to the size of the initial term.
A more fine-grained approach is introduced in~\cite{CKS89} (compare also~\cite{HM08}),
where the derivational complexity function is refined so that in principle
only argument normalised (aka basic) terms are considered. In the following
we refer to the latter notion as the \emph{runtime complexity} of a TRS. 

In recent years the field of complexity analysis of rewrite systems matured
and some advances towards an automated complexity analysis of TRSs evolved
(see~\cite{M09} for an overview). The current focus of modern complexity
analysis of rewrite systems is on techniques that yield \emph{polynomial} runtime
complexity. In this paper we study a complementary view and introduce
the path order \EPOSTAR. 
The definition of \EPOSTAR\ makes use of \emph{tiering}~\cite{BC92} and 
is strongly influenced by a very recent term-rewriting characterisation of
the class of functions computable in exponential time by Arai and the second author~\cite{AraiE09}.
\begin{example}
\label{ex:1}
Consider the following TRS $\RSfib$ which is easily seen to represent
the computation of the $n^{\text{th}}$ Fibonacci number. 
\begin{align*}
    \ffib(x) & \to \dfib(x,0) 
    & \dfib(0,y) & \to \ms(y)  \\
    \dfib(\ms(0),y) & \to \ms(y) 
    & \dfib(\ms(\ms(x)),y) & \to \dfib(\ms(x),\dfib(x,y))
  \end{align*}
\end{example}

Then all rules in the TRS $\RSfib$ can be oriented with \EPOSTAR, which
allows us to (automatically) deduce that the runtime complexity of this
system is exponential. 
Exploiting graph rewriting we show that any TRS compatible
with $\EPOSTAR$ is computable in exponential time on a Turing machine.
Conversely we show that any function $f$ that can be
computed in exponential time can be computed by a TRS $\RS(f)$ such that
$\RS(f)$ is compatible with \EPOSTAR. Hence we provide soundness and
completeness for \EPOSTAR\ with respect to the class of functions computable
in exponential time.

\medskip
\emph{Related Work.}
With respect to rewriting we mention~\cite{EWZ08}, where it is shown that
\emph{matrix interpretations} yield exponential derivational complexity,
hence at most exponential runtime complexity. Our work is also directly
related to work in ICC (see~\cite{BMR09} for an overview). 
Here we want to mention~\cite{BCMT01,BM10} were alternative
characterisations of the class of functions computable in exponential time are given.
For less directly related work we cite~\cite{AJK08}, where a complete characterisation
of (imperative) programs that admit linear and polynomial runtime complexity is
established. As these characterisations are decidable, we obtain a decision
procedure for programs that admit a runtime complexity that is at least exponential.

\medskip
The remaining of the paper is organised as follows. 
In Section \ref{s:prel} we recall definitions. 
In Section \ref{s:epo} we introduce the intermediate order \EPO. 
Our main result is presented in Section \ref{s:epostar}.
In Section \ref{s:impl} we show how the ordering constraints imposed by \EPOSTAR~can
be expressed in propositional logic. Using a state-of-the-art SAT-solvers, 
this gives us a machinery to automatically verify compatibility of 
TRSs with \EPOSTAR.
Finally, we conclude in Section \ref{s:conclusion}.


\section{Preliminaries}\label{s:prel}

We briefly recall central definitions and introduce employed notions.
We assume a basic understanding of complexity theory~\cite{Kozen}.
We write $\NAT$ for the set of \emph{natural numbers}.
Let $R \subseteq {A \times A}$ be a binary relation. 
We write $a \mathrel{R} b$ instead of $(a,b) \in R$.
We denote by $R^+$ the transitive and by $R^*$ the transitive and reflexive closure of $R$.
Further, $R^n$ denotes the $n$-fold composition of $R$.
The relation $R$ is \emph{well-founded} if there exists no infinite 
sequence $a_1 \mathrel{R} a_2 \mathrel{R} \dots$, 
the relation $R$ is \emph{finitely branching} if $\{b \mid a \mathrel{R} b\}$
is finite for all $a \in A$.
A \emph{preorder} is a reflexive and transitive binary relation.
If $\qp$ is a preorder, we write ${\ep} \defi {\qp} \cap {\preccurlyeq}$ and ${\sp} \defi {\qp} \setminus {\ep}$ do denote
the \emph{equivalence} and \emph{strict part} of $\qp$ respectively. 

We follow the notions of \emph{term rewriting} from \cite{BN98}.
Let $\VS$ denote a countably infinite set of variables and $\FS$ a 
signature, i.e, a set of function symbols with associated arities. 
With $\ar(f) \in \NAT$ we denote the \emph{arity} of $f$.
The set of terms over $\FS$ and $\VS$ is denoted by $\TERMS$. 
We denote by $\vec{s}, \vec{t}, \dots$ sequences of terms, 
and for a set of terms $T$ we write $\vec{t} \subseteq T$ to 
indicate that for each $t_i$ appearing in $\vec{t}$, $t_i \in T$. 
We suppose that the signature $\FS$ is partitioned into \emph{defined symbols} $\DS$
and \emph{constructors} $\CS$.
The set of \emph{basic} terms $\Tb \subseteq \TERMS$ is 
defined as $\Tb \defi \{ f(\seq{t}) \mid f \in \DS \text{ and $t_i \in \TA(\CS,\VS)$ for $i \in \{1,\dots,n\}$}\}$.

We write $\subterm$ and $\superterm$ to denote the \emph{subterm} and respectively 
\emph{superterm} relation, the strict part 
of $\subterm$ (respectively $\superterm$) is denoted by $\subtermstrict$ (respectively $\supertermstrict$).
Let $t$ be a term.
We denote by $\size{t}$ and $\depth(t)$ the the \emph{size} and \emph{depth} of the term $t$.
If $t = f(\seq{t})$, we denote by $\rt(t)$ the \emph{root symbol} $f$.
Let $\hole$ be a constant not appearing in $\FS$. 
Elements from $\TA(\FS \cup \{\hole\},\VS)$ with exactly one occurrence of $\hole$ 
are called \emph{contexts} and denoted by $C$, 
$C[t]$ denotes the term obtained by replacing $\hole$ in $C$ by $t$.
A \emph{substitution} is a mapping $\ofdom{\sigma}{\VS \to \TA(\FS,\VS)}$, 
extended to terms in the obvious way.
We write $t\sigma$ instead of $\sigma(t)$.
 A \emph{quasi-precedence} (or simply \emph{precedence}) is a preorder ${\qp} = {{\sp} \uplus {\ep}}$ 
on the signature $\FS$ so that the strict part $\sp$ is well-founded.

A \emph{term rewrite system} (\emph{TRS} for short) is
a set of rewrite rules $l \to r$ such that $l \not \in \VS$ and all variables 
in $r$ occur in $l$.
We always use $\RS$ to denote a TRS. If not mentioned otherwise, $\RS$ is \emph{finite}.
We denote by $\rew$ the \emph{rewrite relation} as induced by $\RS$, 
i.e., $s \rew t$ if $s = C[l\sigma]$ and $t = C[r\sigma]$ for some rule ${l \to r} \in \RS$.
With $\irew$ we denote the \emph{innermost rewrite relation}, that is,
the restriction of $\rew$ where additionally all proper subterms of $l\sigma$ are normal forms.
Here a term $t$ is in \emph{normal form} if there exists no $u$ such that $t \rss u$.
The set of all \emph{normal forms} of $\RS$ is denoted by $\NF(\RS)$.
We write $t \rsn u$ (respectively $t \irsn u$) if $t \rss u$ (respectively $t \irss u$)
and $u \in \NF(\RS)$.
A rewrite step is a \emph{root step} if $C=\hole$ in the definition of $\rew$.
The TRS $\RS$ is a \emph{constructor} TRS if left-hand sides are basic terms, 
$\RS$ is \emph{completely defined} if each defined symbol is completely defined.
Here a symbol is completely defined if it does not occurring in any normal form.
The TRS $\RS$ is called \emph{terminating} if $\rew$ is well-founded, 
$\RS$ is \emph{confluent} if for all terms $s, t_1, t_2$ 
with $s \rss t_1$ and $s \rss t_2$,
there exists $u$ such that $t_1 \rss u$ and $t_2 \rss u$.

Let $\to$ be a finitely branching, well-founded binary relation on terms. 
The \emph{derivation height} of a term $t$ with respect to $\to$ 
is given by $\dl(t,\to) \defi \max\{ n \mid \exists u.~t \to^n u \}$. 
The (innermost) \emph{runtime complexity} of the TRS $\RS$ is defined 
as 
$$\rc^{\text{\scriptsize{(\textsf{i})}}}_{\RS}(n) 
\defi \max \{\dl(t,\to) \mid t \in \Tb \text{ and } \size{t} \leqslant n\} \tkom 
$$ 
where $\to$ denotes $\rew$ or $\irew$ respectively.

Let $\M$ be a \emph{Turing machine} (TM for short) \cite{Kozen} 
with alphabet $\Sigma$, and let $w \in \Sigma^*$. 
We say that $\M$ computes $v \in \Sigma^*$ on input $w$,  
if $\M$ accepts $w$, i.e., $\M$ halts in an accepting state,
and $v$ is written on a dedicated output tape.
We say that $\M$ computes a binary relation $R \subseteq {\Sigma^* \times \Sigma^*}$ 
if for all $w,v \in \Sigma^*$ with $w \mathrel{R} v$, $\M$ computes $v$ on input $w$.
Note that if $\M$ is deterministic then $\mathrel{R}$ induces a partial function 
$\ofdom{f_R}{\Sigma^* \to \Sigma^*}$, 
we also say that $\M$ computes the function $f_R$.

Let $\ofdom{S}{\NAT \to \NAT}$ denote a bounding function.
We say that $\M$ runs in time $S(n)$ if for all but finitely many inputs $w \in \Sigma^*$, 
no computation is longer than $S(\size{w})$. Here $\size{w}$ refers to the length of 
the input $w$.
We denote by $\FTIME(S(n))$ the class of functions computable 
by some TM $\M$ in time $S(n)$. 
Then $\FP \defi \FTIME(\bigO(n^k))$ where $k\in\NAT$
is the class of \emph{polynomial-time computable functions}.
Of particular interest for this paper is the 
class of \emph{exponential-time computable functions} 
$\FEXP \defi \FTIME(2^{\bigO(n^k)})$ where $k \in \NAT$.


\section{Exponential Path Order \EPO}\label{s:epo}

In this section, we introduce an intermediate order EPO, extending the
definitions and results originally presented in \cite{Eguchi09}.
The path order EPO is defined over \emph{sequences} of terms from $\TA(\FS,\VS)$.
To denote sequences, we use an auxiliary function symbol $\List$.
The function symbol $\List$ is variadic, i.e.,
the arity of $\List$ is finite, but arbitrary.
We write $\lseq{t}$ instead of $\List(\seq{t})$.
For sequences $\lseq{s}$ and $\lseq[m]{t}$, we write 
$\lseq{s} \append \lseq[m]{t}$ to denote the concatenation
$\lst{s_1~\cdots~s_n~t_1~\cdots~t_m}$.
We write $\TAL(\FS,\VS)$ for the set of finite sequences of terms from $\TA(\FS,\VS)$, 
i.e. $\TAL(\FS,\VS) \defi \{ \lseq{s} \mid n \in \NAT \text{ and } \seq{s} \in \TA(\FS,\VS)  \}$.
Each term 
$t \in \TA(\FS,\VS)$ is identified with the single list
$\lst{t} = \List(t) \in \TAL(\FS,\VS)$.
This identification allows us to ensure
$\TA(\FS,\VS) \subseteq \TAL(\FS,\VS)$.
We use $a, b, c, \dots $ to denote elements of
$\TAL(\FS,\VS)$, possibly extending them by subscripts.

Let ${\qp}$ to denote a (quasi-)precedence on the signature $\FS$.
We lift the equivalence ${\ep} \subseteq {\qp}$ on $\FS$ to terms in the obvious way:
$s \eqi t$ iff 
(i) $s = t$, or
(ii) $s = f(\seq{s})$, $t = g(\seq{t})$, $f \ep g$
 and $s_i \ep t_i$ for all $i \in \{1,\dots,n\}$.
Further, we write $\esupertermstrict$ for the \emph{superterm relation modulo
term equivalence $\eqi$}, defined by 
$f(\seq{s}) \esupertermstrict t$ if $s_i \esuperterm t$ 
for some $i \in \{ 1, \dots, m \}$. 
Here ${\esuperterm} \defi {\esupertermstrict} \cup {\ep}$.
The precedence $\qp$ induces a \emph{rank} $\rk(f) \in \NAT$ on $f \in \FS$ as follows:
$\rk(f) = \max \{1 + \rk(g) \mid g \in \FS \text{ and } f \sp g\}$, 
where we suppose $\max \varnothing = 0$. 

\begin{definition}\label{d:epo}
  Let $a, b \in \TAL(\FS,\VS)$, and let $k \geqslant 1$.
  Below we assume $f,g \in \FS$.
  We define $a \epo[k] b$ with respect to the precedence $\qp$ if either
  \begin{enumerate}
  \item \label{d:epo:1} 
    $a = f(\seq[m]{s})$
    and
    $s_i \epoeq[k] b$ for some $i \in \{1, \dots, m \}$, or
  \item \label{d:epo:2} 
    $a = f(\seq[m]{s})$, 
    $b = \lseq[n]{t}$ with $n = 0$ or $2 \leqslant n \leqslant k$,
    $f$ is a defined function symbol,
    and $a \epo[k] t_j$ for all $j \in \{1, \dots, n\}$, or
  \item \label{d:epo:3} 
    $a = f(\seq[m]{s})$, 
    $b = g(\seq[n]{t})$ with $n \leqslant k$, 
    $f$ is a defined function symbol with $f \sp g$,
    and $a$ is a strict superterm (modulo $\eqi$) of all $t_j$ ($j \in \{1, \dots, n\}$), or
  \item \label{d:epo:4} 
    $a = \lseq[m]{s}$, 
    $b = b_1 \append \cdots \append b_m$, 
    and for some $j \in \{1, \dots, m\}$, 
    \begin{itemize}
    \item $s_1 \eqi b_1$, \dots, $s_{j-1} \eqi b_{j-1}$,
    \item $s_j \epo[k] b_j$, and
    \item $s_{j+1} \epoeq[k] b_{j+1}$, \dots, $s_{m} \epoeq[k] b_{m}$, or
    \end{itemize}
  \item \label{d:epo:5} 
    $a = f(\seq[m]{s})$, 
    $b = g(\seq[n]{t})$ with $n \leqslant k$, 
    $f$ and $g$ are defined function symbols with $f \ep g$, 
    and for some $j \in \{1, \dots, \min(m,n) \}$, 
    \begin{itemize}
    \item $s_1 \eqi t_1$, \dots, $s_{j-1} \eqi t_{j-1}$, 
    \item $s_j \esupertermstrict t_j$, and
    \item $a \esupertermstrict t_{j+1}$, \dots, $a \esupertermstrict t_n$.
    \end{itemize}
  \end{enumerate}
Here we set ${\epoeq[k]} \defi {\epo[k]} \cup {\ep}$.
Finally, we set ${\epo} \defi \bigcup_{k \geqslant 1} \epo[k]$ and
${\epoeq} \defi \bigcup_{k \geqslant 1} \epoeq[k]$.
\end{definition}

We note that, by Definition \ref{d:epo}.\ref{d:epo:2} with 
$n=0$, we have
$f(\seq[m]{s}) \epo[k] \nil$ for all $k \geqslant 1$ if 
$f$ is a defined function symbol.
It is not difficult to see that
$l \leqslant k$ implies ${\epo[l]} \subseteq {\epo[k]}$.

\begin{lemma}
  \label{permlem}
  Let $a = a_1 \append \cdots \append a_m \in \TAL (\FS,\VS)$ and $j \in \{ 1, \dots, m \}$.
  Suppose that $a_j \epo[k] a_j'$.
  Then $a \epo[k] a_1 \append \cdots \append a_{j - 1} \append a_{j}' \append a_{j + 1} \cdots \append a_m$.
\end{lemma}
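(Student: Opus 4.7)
The plan is to reduce the lemma to a single application of Definition \ref{d:epo}.\ref{d:epo:4}. First, I would flatten each $a_i$ into its constituent terms by writing $a_i = \lst{s_{i,1}, \ldots, s_{i,l_i}}$, so that $a$ becomes the single flat list $\lst{s_{1,1}, \ldots, s_{m,l_m}}$ of length $N = l_1 + \cdots + l_m$. A quick inspection of the five clauses of Definition \ref{d:epo} shows that the empty list $\lst{}$ cannot appear on the left of $\epo[k]$, so the hypothesis forces $l_j \geqslant 1$; for $i \neq j$ with $l_i = 0$ the block $a_i$ simply disappears from the concatenation and may be ignored.

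Second, I would extract from the hypothesis $a_j \epo[k] a_j'$ a decomposition $a_j' = c_1 \append \cdots \append c_{l_j}$ together with an index $p \in \{1, \ldots, l_j\}$ such that $s_{j,r} \eqi c_r$ for $r < p$, $s_{j,p} \epo[k] c_p$, and $s_{j,r} \epoeq[k] c_r$ for $r > p$. When $l_j \geqslant 2$, the list $a_j$ is headed by the auxiliary symbol $\List$, so among the five clauses only clause~4 can apply (clauses 1, 2, 3, 5 demand a left-hand side headed by a symbol of $\FS$), and the required decomposition is precisely the one produced by that clause. When $l_j = 1$, take $p = 1$ and $c_1 = a_j'$; the three conditions then collapse to the hypothesis itself.

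Finally, I would apply clause~4 to obtain $a \epo[k] a''$, where $a'' \defi a_1 \append \cdots \append a_{j-1} \append a_j' \append a_{j+1} \append \cdots \append a_m$. I decompose $a''$ into $N$ blocks by assigning the singleton $\lst{s_{i,p'}}$ to each term $s_{i,p'}$ of $a_i$ with $i \neq j$, and the block $c_r$ to each term $s_{j,r}$ of $a_j$; their concatenation equals $a''$ by construction. Taking the distinguished index $q$ to be the position of $s_{j,p}$ in the flat list, the three clause~4 conditions follow at once: $\eqi$ at positions $r < q$ holds by reflexivity (for blocks coming from $a_i$ with $i < j$) or by the decomposition of $a_j'$ (for $r < p$); the strict comparison $s_{j,p} \epo[k] c_p$ holds at $q$; and $\epoeq[k]$ at positions $r > q$ holds by the symmetric reasoning. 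The only delicate step is the second one, and even there the argument is a straightforward inversion of clause~4 for proper lists; the rest is bookkeeping about the placement of the strict comparison inside the flattened list.
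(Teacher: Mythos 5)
Your proposal is correct and follows essentially the same route as the paper: both reduce the statement to a single outer application of Definition~\ref{d:epo}.\ref{d:epo:4}, after splitting on whether $a_j$ is a single term or a proper list and, in the latter case, inverting clause~4 on the hypothesis to obtain the block decomposition of $a_j'$. Your version merely makes the bookkeeping (flattening, singleton blocks for the untouched terms, placement of the strict position) more explicit than the paper's terser write-up.
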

\begin{proof}
Put
$a' :=
 a_1 \append \cdots \append a_{j-1} \append
 a_j' \append a_{j+1} \append \cdots \append a_m$.
If 
$a_j \in \TA(\FS,\VS)$, 
then
$a \epo[k] a'$ by Definition \ref{d:epo}.\ref{d:epo:4}.
Hence suppose that
$a_j \not\in \TA(\FS,\VS)$.
Then, there exist $n \geqslant 2$ and
$t_1, \dots, t_n \in \TA(\FS,\VS)$
such that
$a_j = \lseq[n]{t}$. 
Since we have 
$a_j \epo[k] a_j'$, according to Definition \ref{d:epo}.\ref{d:epo:4}
there exist 
$j_0 \in \{ 1, \dots, n \}$ and
$b_1, \dots, b_n \in \TAL(\FS,\VS)$ such that
$a_j' = b_1 \append \cdots \append b_n$,
$t_{j_0} \epo[k] b_{j_0}$, and
$t_i \epoeq[k] b_i$ for every $i \in \{ 1, \dots, n \}$.
Hence, again by Definition \ref{d:epo}.\ref{d:epo:4},
we can conclude
$a \epo[k] a'$.
\end{proof}

Following Arai and Moser \cite{ARM05} we define $\Slow[k]$ that measures the 
$\epo[k]$-descending lengths:
\begin{definition}
We define $\ofdom{\Slow[k]}{\TAL(\FS,\VS) \to \NAT}$ as
$$ 
\Slow[k](a) \defi \max
\{ \Slow[k](b) +1 \mid b \in \TAL(\FS,\VS) \text{ and } a \epo[k] b\} \tpkt 
$$
\end{definition}

\begin{lemma}\label{lemG_k}
  For all $k \geqslant 1$ we have
  (i) ${\esupertermstrict} \subseteq {\epo[k]}$, and
  (ii) if $t \in \TA(\CS,\VS)$ then $\Slow[k](t) = \depth(t)$, and 
  (iii) $ \Slow[k] ( \lseq[m]{t}) = \sum_{i=1}^m \Slow (t_i)$.
\end{lemma}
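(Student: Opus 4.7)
My plan is to handle the three parts in order, using~(i) as a tool in~(ii) and Lemma~\ref{permlem} as a tool in~(iii); the first two parts are routine inductions while the main work lies in the upper bound of~(iii).

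For part~(i), I would proceed by a direct structural induction on $s$. If $s = f(s_1,\ldots,s_m) \esupertermstrict t$, then some immediate subterm $s_i$ satisfies $s_i \esuperterm t$; in the case $s_i \eqi t$ we have $s_i \epoeq[k] t$ immediately, and in the case $s_i \esupertermstrict t$ the inductive hypothesis gives $s_i \epo[k] t$ and hence $s_i \epoeq[k] t$. Clause~\ref{d:epo:1} of Definition~\ref{d:epo} now yields $s \epo[k] t$.

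For part~(ii), I would induct on $t \in \TA(\CS,\VS)$ and prove each inequality separately. The lower bound $\Slow[k](t) \geq \depth(t)$ follows immediately from part~(i): iteratively descend to a maximally deep immediate subterm. For the upper bound, the decisive observation is that when $t$ is constructor-rooted (or a variable) only clause~\ref{d:epo:1} of Definition~\ref{d:epo} can produce a genuine descent, since clauses~\ref{d:epo:2}, \ref{d:epo:3} and \ref{d:epo:5} all require a defined root symbol and clause~\ref{d:epo:4} applied to the singleton list identified with $t$ is tautological. Hence any $b$ below $t$ satisfies $s_i \epoeq[k] b$ for some immediate subterm $s_i$, so the inductive hypothesis supplies $\Slow[k](b) \leq \Slow[k](s_i) = \depth(s_i) \leq \depth(t) - 1$, giving $\Slow[k](t) \leq \depth(t)$.

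For part~(iii), write $\sigma(a) := \sum_{i=1}^{m} \Slow[k](t_i)$ for $a = \lst{t_1,\ldots,t_m}$. The lower bound $\Slow[k](a) \geq \sigma(a)$ is obtained by iterating Lemma~\ref{permlem}, scheduling $\Slow[k](t_i)$ descents in each component independently and composing them into a single descent chain on $a$. The matching upper bound is the main obstacle. I would prove it by induction on $\sigma(a)$. For $m \geq 2$ every descent $a \epo[k] b$ must arise from clause~\ref{d:epo:4}, so $b = b_1 \append \cdots \append b_m$ with $t_j \epo[k] b_j$ and $t_i \epoeq[k] b_i$ for $i \neq j$. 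The crux is to show $\sigma(b) < \sigma(a)$; here I would bootstrap from the already established lower bound, which gives $\sigma(b_i) \leq \Slow[k](b_i)$ for each~$i$, and combining this with $\Slow[k](b_i) \leq \Slow[k](t_i)$ (strict for $i = j$) yields $\sigma(b) = \sum_i \sigma(b_i) \leq \sigma(a) - 1$. The inductive hypothesis then supplies $\Slow[k](b) \leq \sigma(b) < \sigma(a)$, whence $\Slow[k](a) \leq \sigma(a)$. A minor auxiliary fact used implicitly in~(ii) and~(iii) is that $\Slow[k]$ is invariant under the term equivalence $\eqi$, which I would isolate as a preliminary lemma and establish by replaying each clause of Definition~\ref{d:epo} across $\eqi$-equivalent roots.
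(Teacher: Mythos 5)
Your proposal is correct and follows essentially the same route as the paper: parts (i) and (ii) by routine induction (the paper merely asserts these), the lower bound in (iii) by composing componentwise descents via Lemma~\ref{permlem}, and the upper bound in (iii) by an induction that analyses the single applicable clause (Definition~\ref{d:epo}.\ref{d:epo:4}) and bootstraps from the already-established lower bound $\sum_j \Slow[k](u_{i,j}) \leqslant \Slow[k](b_i)$. The only (cosmetic) difference is the induction measure --- you induct on $\sigma(a)=\sum_i\Slow[k](t_i)$ where the paper inducts on $\Slow[k](a)$ itself --- and you are right to flag the $\eqi$-invariance of $\Slow[k]$, which the paper uses tacitly.
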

\begin{proof}
The Properties (i) and (ii) can be shown by straight forward 
inductive arguments.
We prove (iii) for the non-trivial case $m \geqslant 2$.
It is not difficult to check that
$ \Slow[k] ( \lseq[m]{t}) \geqslant \sum_{i=1}^m \Slow (t_i)$.
We show that
$ \Slow[k] ( \lseq[m]{t}) \leqslant \sum_{i=1}^m \Slow[k] (t_i)$
by induction on 
$\Slow[k] ( \lseq[m]{t})$.

Let $a = \lseq[m]{t}$.
Then, it suffices to show that, for any 
$b \in \TAL(\FS,\VS)$,
if $a \epo[k] b$, then
$\Slow[k] (b) < \sum_{i=1}^m \Slow[k] (t_i)$.
Fix 
$b \in \TAL(\FS,\VS)$ and
suppose that $a \epo[k] b$.
Then, by Definition \ref{d:epo}.\ref{d:epo:4}, there exist some
$b_1, \dots, b_m \in \TAL(\FS,\VS)$
and  
$j \in \{ 1, \dots, m \}$ such that
$b = b_1 \append \cdots \append b_m$,
$t_i \epoeq[k] b_i$ for each $i \in \{ 1, \dots m \}$, and
$t_j \epo[k] b_j$.
By the definition of $\Slow[k]$, we have that
$ \Slow[k] (t_i) \geqslant \Slow[k] (b_i)$ for each $i \in \{ 1, \dots m \}$, and
$\Slow[k] (t_j) > \Slow[k] (b_j)$.
Thus 
$$\sum_{i=1}^m \Slow[k] (b_i) < \sum_{i=1}^m \Slow[k] (t_i)$$
 follows.
Let $b_i = [ u_{i,1} \cdots u_{i,n_i}]$
for each $i \in \{ 1, \dots, m \}$.
Then, since $\Slow[k] (b) < \Slow[k] (a)$, 
$\Slow[k] (b) \leqslant \sum_{i=1}^m \sum_{j=1}^{n_i} \Slow[k] (u_{i,j})$
holds by induction hypothesis.
Recalling that for each $i \in \{ 1, \dots, m \}$,
$\sum_{j=1}^{n_i} \Slow[k] (u_{i,j}) \leqslant \Slow[k] (b_i)$
also holds we finally obtain that
\begin{align*}
\Slow[k] (b) 
\leqslant \sum_{i=1}^m \sum_{j=1}^{n_i} \Slow[k] (u_{i,j})
\leqslant \sum_{i=1}^m \Slow[k] (b_i) 
< \sum_{i=1}^m \Slow[k] (t_i) \tpkt
\end{align*}
\end{proof}

\begin{theorem}\label{t:epo}
  Suppose that $f \in \FS$ with arity $n \leqslant k$ and $t_1 , \dots , t_n \in \TA(\FS,\VS)$.
  Let $N \defi \max \{ \Slow[k] ( t_i ) \mid 1 \leqslant i \leqslant n \} +1$.
  Then       
  $$
  \Slow[k] ( f ( t_1, \dots, t_n ) ) 
  \leqslant (k+1)^{N^k \cdot \rk(f) + \sum_{ i=1 }^n N^{ k-i } \Slow[k] ( t_i ) } \tpkt
  $$
\end{theorem}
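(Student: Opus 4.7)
The plan is to prove the bound by well-founded induction on the pair $(\rk(f), \langle \Slow[k](t_1),\ldots,\Slow[k](t_n)\rangle)$, ordered lexicographically with the tuple itself compared lexicographically: the rank drops strictly in clause~\ref{d:epo:3} and the tuple drops strictly in clause~\ref{d:epo:5}. Write $E(a) \defi N^k\cdot \rk(f) + \sum_{i=1}^n N^{k-i}\cdot \Slow[k](t_i)$ for the target exponent. Since $\Slow[k](a) = \max\{\Slow[k](b)+1 \mid a \epo[k] b\}$, it suffices to bound $\Slow[k](b) + 1 \leqslant (k+1)^{E(a)}$ for every $b$ with $a \epo[k] b$; I would dispatch this by case analysis on the clause of Definition~\ref{d:epo} witnessing $a \epo[k] b$, and in fact establish the sharper estimate $\Slow[k](b) \leqslant (k+1)^{E(a)-1}$ whenever $b$ is itself a term.

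In clause~\ref{d:epo:1} we have $\Slow[k](b) \leqslant N-1$, and the required inequality $N-1 \leqslant (k+1)^{E(a)-1}$ follows by noting that whenever $N \geqslant 2$ the argument $t_i$ attaining the maximum already contributes $N^{k-i}(N-1) \geqslant N-1$ to $E(a)$. In clause~\ref{d:epo:3}, $b = g(\vec{u})$ with $\rk(g) < \rk(f)$ and each $u_j$ a subterm (modulo $\eqi$) of some $t_\ell$, so $\Slow[k](u_j) \leqslant N-1$; the main IH gives $\Slow[k](b) \leqslant (k+1)^{E(b)}$, and the geometric identity $\sum_{i=1}^k N^{k-i}(N-1) = N^k - 1$ bounds $E(b) \leqslant N^k(\rk(f)-1) + (N^k - 1) = N^k\rk(f) - 1 \leqslant E(a)-1$. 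Clause~\ref{d:epo:5} is the analogous lexicographic case: positions $i<j$ of $b = g(\vec{u})$ contribute identically to $E(b)$ and $E(a)$, position $j$ loses at least $N^{k-j}$ via Lemma~\ref{lemG_k}(i), and positions $i>j$ contribute at most $N^{k-j}-1$ by the same geometric sum, so the sub-IH again yields $\Slow[k](b) \leqslant (k+1)^{E(a)-1}$.

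The main obstacle is clause~\ref{d:epo:2}, where $b = \lseq[m]{u}$ with $m \leqslant k$ and $a \epo[k] u_j$ for every $j$. I would apply Lemma~\ref{lemG_k}(iii) to decompose $\Slow[k](b) = \sum_{j=1}^m \Slow[k](u_j)$, and observe that each $u_j$ is a term, so the judgement $a \epo[k] u_j$ can only be witnessed by clause~\ref{d:epo:1}, \ref{d:epo:3}, or \ref{d:epo:5}. Reusing the sharper bound $\Slow[k](u_j) \leqslant (k+1)^{E(a)-1}$ derived in those cases and summing over $m \leqslant k$ terms yields $\Slow[k](b) \leqslant k\cdot(k+1)^{E(a)-1}$, so $\Slow[k](b) + 1 \leqslant (k+1)\cdot(k+1)^{E(a)-1} = (k+1)^{E(a)}$. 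The absorption of the $k$-fold sum into the base of the exponential is precisely why the base is chosen as $k+1$ rather than $k$, and maintaining the tight ``$-1$'' in the exponent across clauses~\ref{d:epo:3} and \ref{d:epo:5}, so that it can be spent again inside clause~\ref{d:epo:2}, is the delicate bookkeeping that the argument revolves around.
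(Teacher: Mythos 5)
Your proposal is correct and follows essentially the same route as the paper's proof: a case analysis on the clause of Definition~\ref{d:epo} witnessing $a \epo[k] b$, where clauses~\ref{d:epo:3} and~\ref{d:epo:5} are handled by showing the exponent drops by at least one (the paper phrases this as comparing base-$M$/base-$N$ positional representations, you make it explicit via the geometric identity $\sum_{i=1}^k N^{k-i}(N-1) = N^k-1$), and clause~\ref{d:epo:2} is handled by summing $m \leqslant k$ copies of $(k+1)^{E(a)-1}$ and absorbing the factor into the base $k+1$. Your lexicographic induction on $(\rk(f),\langle \Slow[k](t_1),\dots,\Slow[k](t_n)\rangle)$ is interchangeable with the paper's induction on the exponent $N^k\cdot\rk(f)+\sum_i N^{k-i}\Slow[k](t_i)$, and your explicit bookkeeping of the ``$-1$'' in the exponent only makes precise what the paper's step ``from the former cases, it is not difficult to see'' silently relies on.
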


\begin{proof}
We prove the theorem
by induction on 
$ N^k \cdot \rk (f) + \sum_{ i=1 }^n N^{ k-i } \Slow[k] ( t_i ) $.
Let $t = f(t_1, \dots, t_n)$.
In the base case, $f$ is minimal in the precedence $\sp$ on
the signature $\FS$ and the arguments of $f$ are empty.
Hence, 
$\Slow[k] (t) = \Slow[k] (f) = 1 \leqslant
  (k+1)^{ 
    N^k \cdot \rk (f) + \sum_{ i=1 }^n N^{ k-i } \Slow[k] ( t_i ) 
   }
$.
For the induction case, it suffices to show that, for any
$b \in \TAL(\FS,\VS)$,
if $t \epo[k] b$ then 
$\Slow[k] (b) <
 (k+1)^{ 
    N^k \cdot \rk (f) + \sum_{ i=1 }^n N^{ k-i } \Slow[k] ( t_i ) }$.
The induction case splits into five cases according to the last rule which
concludes $t \epo[k] b$.
We consider the most interesting cases:
\begin{enumerate}
\item \textsc{Case} 
$t_i \epoeq[k] b$ for some
$i \in \{ 1, \dots, n \}$:
In this case,
$$
\Slow[k] (b) \leqslant \Slow[k] (t_i) 
< (k+1)^{N^k \cdot \rk (f) + \sum_{ i=1 }^n N^{ k-i } \Slow[k] ( t_i )}\tpkt
$$
\item \textsc{Case} $b = g(\seq[m]{u})$ where $m \leqslant k$, 
  $g$ is a defined symbol with $f \sp g$ 
  and for all $i \in \{ 1, \dots, m\}$, $t$ is a strict superterm (modulo $\eqi$) of $u_i$:
  Let $ M \defi \max \{ \Slow[k] (u_i) \mid 1 \leqslant i \leqslant m \} + 1 $.
  Then, we have $ M \leqslant N $ since $t$ is a strict superterm (modulo $\eqi$)
  of every $u_i$. 
  We claim
  \begin{align*}
  M^k \cdot \rk (g) + \sum_{ i=1 }^m M^{ k-i } \Slow[k] (u_i) 
  < N^k \cdot \rk (f) + \sum_{ i=1 }^n N^{ k-i } \Slow[k] (t_i) \tpkt
  \end{align*}
  To see this, conceive left- and right-hand side as numbers represented 
  in base $M$ and respectively $N$ 
  of length $k$ (observe $\Slow[k](u_i) < M$ and $\Slow[k](t_i) < N$).
  From $\rk(g) < \rk(f)$ and $ M \leqslant N $ the above inequality is obvious.
  Hence, by induction hypothesis, we conclude

  \begin{align}
    \label{epo:th:eq1}    
          \Slow[k] ( b ) 
    & \leqslant (k+1)^{M^k \cdot \rk (g) + \sum_{ i = 1 }^m M^{ k - i  } \Slow[k] (u_i)} \\
    & < (k+1)^{ N^k \cdot \rk (f) + \sum_{ i=1 }^n N^{ k-i } \Slow[k] (t_i)} \tpkt \nonumber 
  \end{align}

\item \textsc{Case} $b = g(\seq[m]{u})$ where $m \leqslant k$, 
  $g$ is a defined symbol with $f \ep g$ and 
  there exists  $j \in \{ 1, \dots, \min(n,m) \}$ 
  such that $t_i \eqi u_i$ for all $i < j$, 
  $t_j$ is a strict superterm (modulo $\eqi$) of $u_j$, and
  $t$ is a strict superterm (modulo $\eqi$) for all $i > j$:
  Let 
  $ M := \max \{ \Slow[k] (u_i) \mid 1 \leqslant i \leqslant m \} + 1 $
  and consider the following claim:
  \begin{claim}
    $\sum_{ i=1 }^m M^{k-i} \Slow[k] (u_i) < \sum_{i=1}^n N^{k-i} \Slow[k] (t_i)$.
  \end{claim}
  To prove this claim, observe that the assumptions give
  $\Slow[k] (t_i) = \Slow[k] (u_i)$ for all $i < j$,
  $\Slow[k] (t_j) < \Slow[k] (u_j)$, and
  $\Slow[k] (t_i) < \Slow[k] (u_i)$ for all $i < j$:
  This implies that $ M \leqslant N $ and  
  \begin{eqnarray*}
    \sum_{ i=1 }^m M^{ k-i } \Slow[k] (u_i)
    & \leqslant & \sum_{i=1}^{j-1} N^{k-i} \Slow[k] (t_i) + N^{k-j} (\Slow[k] (t_j) - 1)
    + \sum_{ i = j + 1 }^n N^{ k-i } ( N - 1 ) \\
    & < & \sum_{ i=1 }^n N^{ k-i } \Slow[k] (t_i) \tpkt
  \end{eqnarray*}
  The claim together with induction hypothesis yields 
  Equations \eqref{epo:th:eq1} as above, concluding the case.

\item \textsc{Case} 
    $b = \lseq[m]{u}$ where $m = 0$ or $2 \leqslant m \leqslant k$
    and $t \epo[k] u_j$ for all $j \in \{1, \dots, m\}$:
    First suppose $m = 0$, i.e., $b = \nil$. Then $\Slow[k](b) = 0$ 
    by Lemma~\ref{lemG_k} and the Theorem follows trivially.  Hence 
    suppose $2 \leqslant m \leqslant k$.
    From the former cases, it is not difficult to see that
    $$
    \Slow[k] (u_i) 
    \leqslant (k+1)^{ (N^k \cdot \rk (f) + \sum_{i=1}^n N^{k-i} \Slow[k] (t_i)) - 1} \tpkt
    $$
    for all $i \in \{1,\dots,m\}$.
    Therefore by Lemma \ref{lemG_k}, and employing $m \leqslant k$, we see
    \begin{align*}
      \Slow[k] (b) = \sum_{i=1}^{m} \Slow[k] (u_i) 
      & \leqslant k \cdot (k+1)^{(N^k \cdot \rk (f) + \sum_{i=1}^n N^{k-i} \Slow[k] (t_i)) -1} \\
      & < (k+1)^{ N^k \cdot \rk (f) + \sum_{i=1}^n N^{k-i} \Slow[k] (t_i)} \tpkt
    \end{align*}
\end{enumerate}
This completes the proof of the theorem. 
\end{proof}


\section{Exponential Path Order \EPOSTAR}\label{s:epostar}

We now present the \emph{exponential path order} (\EPOSTAR for short), defined 
over terms $\TERMS$.
We call a precedence $\qp$ \emph{admissible} if constructors are minimal, i.e., 
for all defined symbols $f$ we have $f \sp c$ for all constructors $c$.
Throughout the following, we fix ${\qp}$ to 
denote an \emph{admissible} quasi-precedence on $\FS$.
A \emph{safe mapping} $\safe$ on $\FS$ is a function 
$\ofdom{\safe}{\FS \to 2^\NAT}$ that associates 
with every $n$-ary function symbol $f$ the set of \emph{safe argument positions} 
$\{i_1, \dots , i_m\} \subseteq \{1,\dots,n\}$.
Argument positions included in $\safe(f)$ are called \emph{safe},
those not included are 
called \emph{normal} and collected in $\normal(f)$.
For $n$-ary constructors $c$ 
we require that all argument positions are safe, 
i.e., $\safe(c) = \{1,\dots,n\}$.
To simplify the presentation, we write $f(t_{i_1}, \dots, t_{i_k}; t_{j_1}, \dots, t_{j_l})$
for the term $f(\seq{t})$ with $\normal(f) = \{\seq[k]{i}\}$ and $\safe(f) = \{\seq[l]{j}\}$.
We restrict term equivalence $\eqi$ in the definition of $\eqis$ below
so that the separation of arguments through $\safe$ is taken into account:
We define $s \eqis t$ if either (i) $s = t$, or (ii) $s = f(\pseq[l][m]{s})$, $t = g(\pseq[l][m]{t})$ where
$f \ep g$ and $s_i \eqis t_i$ for all $i \in \{1, \dots, m\}$.
The definition of an instance 
$\epostar$ of $\EPOSTAR$ is split into two definitions.
\begin{definition}\label{d:subepo}
  Let $s, t \in \TERMS$ such that $s = f(\pseq[l][m]{s})$.
  Then $s \subepostar t$ if $s_i \subepostareq t$ for some $i \in \{1, \dots, m\}$.
  Further, if $f \in \DS$, then $i \in \normal(f)$.
  Here we set ${\subepostareq} \defi {\subepostar} \cup {\eqis}$.
\end{definition}

\begin{definition}\label{d:epostar}
  Let $s, t \in \TERMS$ such that $s = f(\pseq[l][m]{s})$.
  Then $s \epostar t$ with respect to the admissible precedence $\qp$ and safe mapping $\safe$ if either
  \begin{enumerate}
    \item \label{d:epostar:1} $s_i \epostareq t$ for some $i \in \{1, \dots, l+m\}$, or
    \item \label{d:epostar:2} $t = g(\pseq[k][n]{t})$, $f \sp g$ and
      \begin{enumerate}
      \item $s \subepostar t_1, \dots, s \subepostar t_k$, and 
      \item $s \epostar t_{k+1}, \dots, s \epostar t_{k+n}$, or
      \end{enumerate}
    \item \label{d:epostar:3} $t = g(\pseq[k][n]{t})$, $f \ep g$ and for some $i \in \{1, \dots, \min(l,k)\}$ 
      \begin{enumerate}
      \item $s_1 \eqis t_1, \dots, s_{i-1} \eqis t_{i-1}$, $s_i \subepostar t_i$, $s \subepostar t_{i+1}, \dots, s \subepostar t_{k}$, and
      \item $s \epostar t_{k+1}, \dots, s \epostar t_{k+n}$.
      \end{enumerate}
  \end{enumerate}
  Here we set ${\epostareq} \defi {\epostar} \cup {\eqis}$.
\end{definition}

\begin{theorem}\label{th:epostar:sound}
  Suppose $\RS$ is a constructor TRS compatible with $\epostar$,i.e., $\RS \subseteq {\epostar}$. 
  Then the innermost runtime complexity $\rcRi(n)$ is bounded by an 
  exponential $2^{\bigO(n^k)}$ for some fixed $k \in \NAT$.
\end{theorem}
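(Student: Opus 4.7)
The plan is to reduce the claim to the exponential bound on $\epo[k]$-descending chains established in Theorem~\ref{t:epo}. Fix $k$ to be (one more than) the maximum arity appearing in $\RS$, so that every function symbol of $\RS$ satisfies the arity restriction $n \leqslant k$ of Definition~\ref{d:epo}. I would then introduce a structure-preserving interpretation $\llbracket \cdot \rrbracket : \TERMS \to \TAL(\FS,\VS)$ that uses the safe mapping $\safe$ to translate the two-level structure of \EPOSTAR\ into the list structure of \EPO. Intuitively, normal arguments of a defined-symbol term carry the ``tier'' and are kept as nested subterms, whereas safe arguments contribute only to the surrounding sequence and are flattened into it via $\append$. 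Concretely, for $s = f(s_1, \dots, s_l; t_1, \dots, t_m)$ with $f \in \DS$ a suitable form is
$$\llbracket s \rrbracket \defi \lst{f(\llbracket s_1 \rrbracket, \dots, \llbracket s_l \rrbracket)} \append \llbracket t_1 \rrbracket \append \cdots \append \llbracket t_m \rrbracket \tkom$$
where on normal arguments the interpretation is taken in a variant that returns a term (e.g.\ the leading entry of the list).

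The central step is a compatibility lemma: $s \epostar t$ implies $\llbracket s \rrbracket \epo[k] \llbracket t \rrbracket$, and $s \eqis t$ implies $\llbracket s \rrbracket \eqi \llbracket t \rrbracket$. I would prove this by induction on the derivation of $s \epostar t$ with case analysis on the last rule. Definition~\ref{d:epostar}.\ref{d:epostar:1} (subterm) corresponds to Definition~\ref{d:epo}.\ref{d:epo:1}, with Lemma~\ref{permlem} lifting descents through concatenation. Definition~\ref{d:epostar}.\ref{d:epostar:2} (precedence drop) triggers Definition~\ref{d:epo}.\ref{d:epo:3} on the head: admissibility of $\qp$ together with $s \subepostar t_i$ (through normal arguments of $t$) forces the head interpretation of each $t_i$ to occur as a strict superterm, modulo $\eqi$, of the head $f(\llbracket s_1 \rrbracket, \dots, \llbracket s_l \rrbracket)$ of $\llbracket s \rrbracket$; the safe-argument conditions $s \epostar t_{k+j}$ then feed into Definition~\ref{d:epo}.\ref{d:epo:2} at the enclosing list level. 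Definition~\ref{d:epostar}.\ref{d:epostar:3} matches Definition~\ref{d:epo}.\ref{d:epo:5} analogously. Closure of the interpretation under substitutions by innermost normal forms and under nonroot contexts follows by the same strategy, again using Lemma~\ref{permlem} to propagate list-level descents through surrounding concatenations. Combined with $\RS \subseteq {\epostar}$ this gives $s \irew t \Rightarrow \llbracket s \rrbracket \epo[k] \llbracket t \rrbracket$.

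With the compatibility lemma in hand, the bound follows by direct computation. For a basic term $s = f(s_1, \dots, s_l; t_1, \dots, t_m)$ of size at most $n$, every $s_i$ and $t_j$ is a constructor term of depth at most $n$. By Lemma~\ref{lemG_k}~(ii) and (iii), each of the resulting $\Slow[k]$-values is bounded by $n$. Applying Theorem~\ref{t:epo} with $N \defi n + 1$ to the head term $f(\llbracket s_1 \rrbracket, \dots, \llbracket s_l \rrbracket)$ of $\llbracket s \rrbracket$, and then Lemma~\ref{lemG_k}~(iii) to account for the appended safe parts, yields $\Slow[k](\llbracket s \rrbracket) \leqslant (k+1)^{c \, n^k}$ for some constant $c$ depending only on $\RS$. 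Since every innermost derivation starting from $s$ has length at most $\Slow[k](\llbracket s \rrbracket)$, we conclude $\rcRi(n) = 2^{O(n^k)}$.

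The main obstacle will be the compatibility lemma, and within it the precedence-decrease case in particular: one must verify that the asymmetry between normal and safe arguments of the right-hand side in Definition~\ref{d:epostar}.\ref{d:epostar:2} (requiring $\subepostar$ for normal arguments versus $\epostar$ for safe ones) is faithfully reflected by the interpretation, so that admissibility of $\qp$ puts the head interpretations of the normal $t_i$'s into strict-superterm position of the head of $\llbracket s \rrbracket$ while the safe $t_j$'s appear as separate entries of the enclosing list. Closure under substitution is the other delicate point: it is essential that innermost rewriting together with the constructor-TRS hypothesis restricts substituted terms to constructor normal forms, since otherwise a safe subterm could in principle carry arbitrary tier, destroying the bound on $\Slow[k]$ of the interpreted arguments.
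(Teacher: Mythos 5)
Your overall strategy coincides with the paper's: a predicative interpretation that keeps normal arguments inside a head term and flattens safe arguments into a surrounding sequence, a compatibility lemma embedding $\irew$-steps into some $\epo[k]$, closure under contexts via Lemma~\ref{permlem}, and a final appeal to Theorem~\ref{t:epo}. Two concrete steps, however, fail as you state them. First, taking $k$ to be the maximal arity plus one is not enough. The parameter $k$ does not only bound arities in Definition~\ref{d:epo}.\ref{d:epo:3} and Definition~\ref{d:epo}.\ref{d:epo:5}; in Definition~\ref{d:epo}.\ref{d:epo:2} it also bounds the \emph{length of the list} on the right-hand side of the comparison. Because your interpretation flattens nested safe arguments, the interpretation of an instantiated right-hand side $r\sigma$ is a list whose length is governed by the number of defined-symbol occurrences in $r$, not by any arity: for instance $\m{g}(;\m{h}(;\m{h}(;\m{h}(;y))))$ interprets to a list of length four although all arities are at most one. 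This is why the paper takes $\ell = \max\{\size{r} \mid {l \to r} \in \RS\}$ (Lemma~\ref{l:epostar:embed}); with your choice of $k$ the compatibility lemma is simply false for such rules.

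Second, and more seriously, your closure-under-substitution step rests on the claim that innermost rewriting together with the constructor-TRS hypothesis restricts substituted terms to \emph{constructor} normal forms. That is false when $\RS$ is not completely defined: a normal form may contain a defined symbol (a stuck partial application), so a rule variable in an innermost step can be instantiated by a non-constructor term $u$. Then the interpretation of $u$ is not $\nil$, the bound on the list length of the interpreted right-hand side breaks, and the strict-superterm conditions needed to invoke Definition~\ref{d:epo}.\ref{d:epo:3} and Definition~\ref{d:epo}.\ref{d:epo:5} (which require the substitution to have images in $\TA(\CS,\VS)$, cf.\ Lemma~\ref{l:epostar:embedroot}) are no longer available. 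The paper closes exactly this gap by first completing $\RS$ to $\RS \cup \RSS$ (Definition~\ref{d:epostar:rss}), where every stuck term rewrites to a fresh minimal constructor $\bot$: this preserves compatibility with $\epostar$, does not shorten innermost derivations (Lemma~\ref{l:epostar:simul}), and makes normal forms coincide with constructor terms, which yields the invariant that all normal arguments encountered along a derivation remain constructor terms (Lemma~\ref{l:epostar:tnderiv}) --- the very fact your treatment of normal arguments implicitly relies on. Without this completion step, or an explicit replacement for it, your compatibility lemma cannot be applied to every step of an innermost derivation.
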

We prove Theorem \ref{th:epostar:sound} in Section~\ref{s:soundness}.

\begin{example}\label{ex:2}[Example \ref{ex:1} continued].
  Let $\safe$ be the safe mapping such that 
  $\safe(\ffib) = \varnothing$ and $\safe(\dfib) = \{2\}$.
  Further, let $\qp$ be the admissible precedence 
  with $\ffib \sp \dfib \sp \ms \ep 0$. Then one verifies that 
  $\RSfib \subseteq {\epostar}$ for the induced order $\epostar$. 
  By Theorem~\ref{th:epostar:sound} we conclude that the 
  innermost runtime complexity of $\RSfib$ is exponentially bounded.
\end{example}

Define the \emph{derivational complexity} of a rewrite system $\RS$ 
as $\dcR(n) \defi \max \{\dl(t,\to) \mid t \in \TERMS \text{ and } \size{t} \leqslant n\}$.
The following example demonstrates that Theorem~\ref{th:epostar:sound} 
\emph{does neither} hold for full rewriting \emph{nor} derivational complexity.
\begin{example}
  Consider the TRS $\RSd$ consisting of the rules
  $$\m{d}(;x) \to \m{c}(;x,x) \qquad \m{f}(0;y) \to y \qquad \m{f}(\m{s}(;x);y) \to \m{f}(x;\m{d}(;\m{f}(x;y)))\tpkt$$
  Then $\RSd \subseteq {\epostar}$ for the precedence $\m{f} \sp \m{d} \sp \m{c}$
  and safe mapping as indicated in the definition of $\RSd$.  
  Theorem~\ref{th:epostar:sound} proves that the 
  innermost runtime complexity of $\RSd$ is exponentially bounded.
\end{example}
On the other hand, the runtime complexity of $\RSd$ 
(with respect to full rewriting) grows strictly faster 
than any exponential:
Consider for arbitrary $t \in \TERMS$ the term $f(s^n(0),t)$. 
We verify, for $n > 0$, $\dl(f(s^n(0),t),\rew) \geqslant 2^{2^{n-1}} \cdot (1 + \dl(t,\rew))$
by induction on $n$.
For $m \in \NAT$, set $\numeral{m} \defi \m{s}^m(0)$.
Consider the base case $n=1$.
Then any maximal derivation 
$$
\m{f}(\numeral{1},t) 
\rew \m{f}(0,\m{d}(\m{f}(0,t))) 
\rew \m{f}(0,\m{c}(\m{f}(0,t),\m{f}(0,t)))
\rsl{3} \m{c}(t,t) \rew \cdots
$$
proves this case. For this observe that $\dl(\m{c}(t,t), \rew) = 2 \cdot \dl(t,\rew)$, 
and hence $\dl(\m{f}(\numeral{1},t),t) \geqslant 5 + 2 \cdot \dl(t,\rew) > 2^{2^{0}} \cdot (1 + \dl(t,\rew))$.
Notice that we employ lazy reduction of $\m{d}$ in an essential way.
For the inductive step, consider a maximal derivation
$\m{f}(\numeral{n+1},t) \rew \m{f}(\numeral{n},\m{d}(\m{f}(\numeral{n},t))) \rew \cdots$.
Applying induction hypothesis twice we obtain
\begin{align*}
  \dl(\m{f}(\numeral{n+1},t), \rew) > \dl(\m{f}(\numeral{n},\m{d}(\m{f}(\numeral{n},t))), \rew) 
  & > \dl(\m{f}(\numeral{n},\m{f}(\numeral{n},t)), \rew) \\
  & > 2^{2^{n-1}} \cdot (2^{2^{n-1}} \cdot (1 + \dl(t,\rew))) \\
  & = 2^{2^{n}} + 2^{2^{n}} \cdot \dl(t,\rew) \tpkt
\end{align*}
\qed


We now present the application of Theorem~\ref{th:epostar:sound}
in the context of \emph{implicit computational complexity (ICC)}.
Following \cite{BMM09}, and extended to nondeterministic computation 
in \cite{AM10b,BM10}, we give semantics to TRS $\RS$ as follows:
\begin{definition}
  Let $\Val \defi \TA(\CS,\VS)$ denote the set of \emph{values}.
  Further, let $\NA \subseteq \Val$ be a finite set of
  \emph{non-accepting patterns}.  We call a term $t$ \emph{accepting}
  (with respect to $\NA$) if there exists no $p \in \NA$ such that
  $p\sigma = t$ for some substitution $\sigma$.  We say that $\RS$
  \emph{computes the relation $R \subseteq {\Val \times \Val}$} with
  respect to $\NA$ if there exists $\m{f} \in \DS$ such that for all
  $s,t \in \Val$,
  \begin{equation*}
    {s \RR t} \text{\qquad iff \qquad} {\m{f}(s) \irsn[\RS] t} \text{ and $t$ is accepting}\tpkt
  \end{equation*}
  On the other hand, we say that a relation $R$ is computed by $\RS$
  if $R$ is defined by the above equations with respect to \emph{some}
  set $\NA$ of non-accepting patterns.
\end{definition}
For the case that $\RS$ is \emph{confluent}
we also say that $\RS$ computes the (partial) \emph{function} 
induced by the relation $R$. Note that the
restriction to binary relations is a non-essential simplification.
The assertion that for normal forms $t$, $t$ is accepting
aims to eliminate by-products of the computation that should not be
considered as part of the computed relation $R$.

As a consequence of Theorem \ref{th:epostar:sound} we derive our main result.
Following \cite{LM09,AM10b} we employ \emph{graph rewriting} \cite{P:01} 
to efficiently compute normal forms.

\begin{theorem}[Soundness]\label{t:soundness}
  Suppose $\RS$ is a constructor TRS compatible with $\epostar$. 
  The relations computed by $\RS$ are computable in nondeterministic time $2^{\bigO(n^k)}$
  for some $k \in \NAT$.
  In particular, if $\RS$ is confluent then
  $f \in \FEXP$ for each function $f$ computed by $\RS$.
\end{theorem}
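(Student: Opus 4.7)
The plan is to combine Theorem~\ref{th:epostar:sound} with an efficient simulation of innermost rewriting via \emph{term graph rewriting} with full sharing, as developed in \cite{LM09,AM10b}. First I would observe that for a value $s \in \Val$ of size $n$, the starting term $\m{f}(s)$ is basic and of size $\bigO(n)$, so Theorem~\ref{th:epostar:sound} bounds the length of every innermost derivation issuing from it by $2^{\bigO(n^k)}$ for some fixed $k \in \NAT$ depending only on $\RS$.

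Next I would construct a nondeterministic Turing machine $\M$ that maintains a DAG representation of the current term, starts from the DAG of $\m{f}(s)$, and at each iteration nondeterministically selects an innermost redex in the DAG and contracts it, reusing the existing node structure with maximal sharing. Because each graph step enlarges the DAG by at most a constant (bounded by the maximal size of a right-hand side of $\RS$), the DAG has size $2^{\bigO(n^k)}$ throughout. Redex selection, pattern matching and rule application can each be implemented in time polynomial in the current DAG size, so after absorbing the polynomial factor into the exponent the total running time is $2^{\bigO(n^k)}$. Upon reaching a normal form, $\M$ tests the acceptance condition by attempting to match each $p \in \NA$ against the DAG, which is again polynomial, and then outputs the shared representation of $t$.

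The principal obstacle I expect is verifying that this graph-rewriting simulation is faithful to innermost term-rewriting semantics, in the sense that accepting runs of $\M$ correspond precisely to innermost reductions $\m{f}(s) \irsn t$ yielding accepting $t$. I would discharge this by appealing to the soundness and adequacy theorems for sharing-preserving term graph rewriting established in \cite{LM09,AM10b}, exploiting that $\RS$ is a constructor TRS so that the standard pathologies (collapsing rules, interactions between non-left-linearity and sharing) can be ruled out or handled uniformly.

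Finally, confluence of $\RS$ implies that the normal form of $\m{f}(s)$ is unique and independent of the order of redex contractions, so the relation $R$ coincides with its induced partial function $f_R$ and $\M$ may be derandomised: a deterministic Turing machine simulating innermost rewriting by a fixed strategy (say, leftmost-innermost) computes $f_R$ within the same exponential time bound, witnessing $f_R \in \FEXP$.
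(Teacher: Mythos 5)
Your proposal follows essentially the same route as the paper's proof: both invoke Theorem~\ref{th:epostar:sound} to bound the number of innermost rewrite steps by $2^{\bigO(n^k)}$, both appeal to the shared/DAG term-graph machinery of \cite{AM10b} (and \cite{LM09}) to keep the representation size and the cost of each single step under control despite duplicating rules, and both handle nondeterministic redex selection for non-confluent systems versus a deterministic strategy under confluence, finishing with a polynomial pattern-matching check against $\NA$. The argument is correct and matches the paper's proof sketch in all essential respects.
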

\begin{proof}
  We sketch the implementation of the relation $R_f$ (function $f$) on a 
  Turing machine $\M_f$.
  Single out the corresponding defined function symbol $\m{f}$, and
  consider some arbitrary input $v \in \Val$. 
  First writing $\m{f}(v)$ on a dedicated working tape, 
  the machine $\M_f$ iteratively rewrites $\m{f}(v)$ to normal 
  form in an innermost fashion. For non-confluent TRSs $\RS$, 
  the choice of the redex is performed nondeterministically, 
  otherwise some innermost redex is computed deterministically.
  By the assumption $\RS \subseteq {\epostar}$, Theorem~\ref{th:epostar:sound}
  provides an upper bound $2^{\size{\m{f}(v)}^{c_1}}$
  on the number of iterations for some $c_1 \in \NAT$, i.e., 
  the machine performs at most exponentially many iterations in the size
  of the input $v$. 
%
  To investigate into the complexity of a single iteration, 
  consider the $i$-th iteration
  with $t_i$ written on the working tape (where $\m{f}(v) \rsl{i} t_i$). 
  We want to compute some $t_{i+1}$ with $t_i \irew t_{i+1}$. 
  Observe that in the presence of duplicating rules, $\size{t_i}$ might 
  be exponential in $i$ (and $\size{v}$).
  As we can only assume $i \leqslant 2^{\size{\m{f}(v)}^{c_1}}$, 
  we cannot hope to construct $t_{i+1}$ from $t_i$ in time exponential in $\size{v}$
  if we use a representation of terms that is linear in size in the number of symbols.
  Instead, we employ the machinery of \cite{AM10b}. 
  By taking sharing into account, 
  \cite{AM10b} achieves an encoding of $t_i$ that is bounded in size polynomially in $\size{v}$ and $i$.
  Hence in particular $t_i$ is encoded in size $2^{\size{s}^{c_2}}$ for 
  some $c_2 \in \NAT$ depending only on $\RS$.
  In the setting of \cite{AM10b} a single step is computable in polynomial time
  (in the encoding size). And so $t_{i+1}$ is
  computable from $t_i$ in time $2^{\size{s}^{c_3}}$ for some $c_3 \in \NAT$
  depending only on $\RS$.
  Overall, we conclude that normal forms are computable in time 
  $2^{\size{s}^{c_1}} \cdot 2^{\size{s}^{c_3}} = 2^{\bigO(\size{s}^k)}$ for some $k \in \NAT$ worst case.
  After the final iteration, the machine $\M_f$ checks whether the computed normal form $t_l$ is accepting
  and either accepts or rejects the computation.
  Using the machinery of \cite{AM10b} pattern matching is polynomial the encoding size of
  $t_l$, by the above bound on encoding sizes the operation is exponential in $\size{v}$.
  As $v$ was chosen arbitrary and $k$ depends only on $\RS$, we conclude the theorem.
\end{proof}

In correspondence to Theorem~\ref{t:soundness}, 
\EPOSTAR~is complete in the following sense.
Again this is proved in a separate section below (c.f. Section~\ref{s:completeness}).
\begin{theorem}[Completeness]\label{th:epostar:complete}
  Suppose $f \in \FEXP$. Then there exists a confluent, constructor TRS $\RS_f$ computing $f$ 
  that is compatible with some exponential path order $\epostar$.
\end{theorem}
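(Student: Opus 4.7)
Fix a deterministic TM $\M$ computing $f$ within $T(n) = 2^{cn^k}$ steps, for some constants $c, k \in \NAT$. The plan is to construct a confluent constructor TRS $\RS_f$ that mechanically simulates $\M$ on every input and can be oriented by some $\epostar$. Confluence will come from orthogonality (left-linearity together with pairwise non-overlapping left-hand sides); correctness of the simulation is routine; the delicate point is EPO*-compatibility.

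The signature of $\RS_f$ uses constructors for the tape alphabet of $\M$, its states, and a configuration constructor $\m{cfg}$. The defined symbols are: $\m{init}$ (building the initial configuration from an encoded input), $\m{step}$ (one non-overlapping, non-recursive rule per transition of $\M$), $\m{extract}$ (projecting a halting configuration onto its output tape), a polynomially-sized unary counter generator $\m{cnt}$, a doubling iterator $\m{iter}$, and a top symbol $\m{f}$ governed by the single rule
\begin{equation*}
  \m{f}(w) \to \m{extract}(\m{iter}(\m{cnt}(w);\, \m{init}(w))) \tpkt
\end{equation*}
Setting $\safe(\m{iter}) = \{2\}$ and $\safe(\m{step}) = \{1\}$, the iterator is defined by
\begin{align*}
  \m{iter}(0;\, y) & \to y, \\
  \m{iter}(\m{s}(;x);\, y) & \to \m{iter}(x;\, \m{iter}(x;\, \m{step}(;y))) \tpkt
\end{align*}
An induction on $n$ shows that reducing $\m{iter}(\m{s}^n(0);\, y)$ applies $\m{step}$ exactly $2^n - 1$ times, so the simulation reaches a halting configuration as long as $\m{cnt}(w)$ returns a unary numeral of size at least $c|w|^k$; harmless self-loop rules for $\m{step}$ at the accepting state absorb any excess iterations.

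For EPO*-compatibility, I would pick any admissible precedence with $\m{f} \sp \m{extract} \sp \m{iter} \sp \m{step} \sp \m{init} \sp \m{cnt}$ above the constructors and verify each rule against Definition~\ref{d:epostar}. The doubling rule is the crux: one applies Definition~\ref{d:epostar}.\ref{d:epostar:3} at the outer $\m{iter}$ with $i=1$, using $\m{s}(;x) \subepostar x$ from Definition~\ref{d:subepo}, and recursively reduces the safe-argument claim to $\m{iter}(\m{s}(;x);y) \epostar \m{step}(;y)$, which is dispatched by case~\ref{d:epostar:2} followed by case~\ref{d:epostar:1} for $\m{iter}(\m{s}(;x);y) \epostar y$. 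The rules for $\m{init}$, $\m{step}$, $\m{extract}$ and the top rule are flat and are handled by case~\ref{d:epostar:2} with the precedence.

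The hard part will be EPO*-compatibility of the polynomial counter $\m{cnt}$: the standard presentation of $|w|^k$ via iterated multiplication forces recursive calls into normal argument positions of further defined symbols, which is forbidden by Definition~\ref{d:epostar}.\ref{d:epostar:2}. The resolution is the tiered programming discipline inherited from the POP*-based characterisation of $\FP$, which EPO* conservatively extends: polynomial functions are encoded by tail-recursive schemata with judicious placement of safe and normal positions, with each clause checked individually against Definition~\ref{d:epostar}. Once $\m{cnt}$ is in place, innermost evaluation guarantees that $\m{iter}$ is applied to a constructor numeral in its normal position, and the full $\RS_f$ reduces $\m{f}(\overline{w})$ to the encoding of $f(w)$ in $\bigO(T(|w|))$ rewrite steps. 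Orthogonality then yields confluence, completing the construction.
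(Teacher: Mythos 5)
Your construction has a fatal problem at the top rule $\m{f}(w) \to \m{extract}(\m{iter}(\m{cnt}(w);\,\m{init}(w)))$: the clock $\m{cnt}(w)$ is placed in the \emph{normal} argument position of $\m{iter}$, and no instance of $\epostar$ can orient this. Both Definition~\ref{d:epostar}.\ref{d:epostar:2} and \ref{d:epostar}.\ref{d:epostar:3} demand $s \subepostar t_j$ for every normal argument $t_j$ of the right-hand side, and $\subepostar$ (Definition~\ref{d:subepo}) only descends through argument positions of $s$ until it reaches a term $\eqis$-equivalent to $t_j$. Since $\m{cnt}(w)$ has a defined root symbol and $w$ is a variable, $\m{f}(w) \subepostar \m{cnt}(w)$ is simply false, regardless of how you choose the precedence or the safe mapping. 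Nor can you escape by declaring the first position of $\m{iter}$ safe: the lexicographic descent in Definition~\ref{d:epostar}.\ref{d:epostar:3} is only available on normal positions, so the recursion argument of $\m{iter}$ must be normal. This is not an accident of the definition but its whole point: it enforces \emph{weak} safe composition, under which only (projections of) normal arguments of the caller may flow into normal positions of callees, reflecting that $\FEXP$ is not closed under composition. Your own worry about orienting the internals of $\m{cnt}$ is the lesser issue; even a perfectly $\epostar$-compatible $\m{cnt}$ cannot have its \emph{output} fed into the recursion position of $\m{iter}$. The remainder of your argument (the orientation of the doubling rule via case~\ref{d:epostar:3} with $\m{s}(;x) \subepostar x$, the step/extract/init rules, confluence by orthogonality, the $2^n-1$ step count) is fine, but the exponential budget never gets wired to a polynomial in $|w|$.

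The repair is essentially to abandon the separately computed unary clock and instead recurse \emph{simultaneously on several copies of $w$ itself} sitting in normal positions, which is exactly what the paper does by a different route: it does not simulate a Turing machine at all, but imports the Arai--Eguchi function algebra $\NC = \FEXP$, whose recursion scheme (safe nested recursion on notation) performs nested recursion on a vector $\vec{z}$ of normal arguments and whose composition scheme only permits variables in normal positions. The paper then orients each finite restriction $\RS_f$ of the associated rewrite system $\RN$ by an $\epostar$ whose precedence is the definition depth $\lh$, the only nontrivial case being the nested-recursion rule. If you want to keep the machine-simulation flavour, you would have to re-derive something equivalent to that scheme (e.g.\ $\m{iter}(z_1,\dots,z_k;\,y)$ with lexicographic descent on the $z_i$, all instantiated by $w$ in the top rule), at which point you are reconstructing the cited characterisation rather than giving an alternative to it.
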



\subsection{Soundness}\label{s:soundness}

We now prove Theorem~\ref{th:epostar:sound}, frequently employing the following:
\begin{lemma}\ \label{l:epostar:help}
  The inclusions ${\subepostar} \subseteq {\esupertermstrict} \subseteq {\epostar}$ hold and further, 
  if $s \in \TA(\CS,\VS)$ and $s \epostar t$ then $t \in \TA(\CS,\VS)$. 
\end{lemma}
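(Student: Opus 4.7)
The plan is to establish the three assertions by separate structural inductions, each following the shape of the relation it concerns.

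For $\subepostar \subseteq \esupertermstrict$ I induct on the derivation of $s \subepostar t$. By Definition~\ref{d:subepo} we have $s = f(\pseq[l][m]{s})$ and some argument $s_i$ with $s_i \subepostareq t$. Either $s_i \eqis t$, whence $s_i \eqi t$ (since $\eqis$ refines $\eqi$) gives $s \esupertermstrict t$; or $s_i \subepostar t$, and the inductive hypothesis delivers $s_i \esupertermstrict t$, again yielding $s \esupertermstrict t$.

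For $\esupertermstrict \subseteq \epostar$ I induct on $s$. Writing $s = f(\seq[m]{s})$ with $s_i \esuperterm t$ for some~$i$, I intend to apply clause~(\ref{d:epostar:1}) of Definition~\ref{d:epostar} at this $i$. If $s_i = t$ then $s_i \eqis t$ trivially; if $s_i \esupertermstrict t$ the inductive hypothesis furnishes $s_i \epostar t$. In either case $s_i \epostareq t$, and $s \epostar t$ follows. The one delicate residual situation is $s_i \eqi t$ with $s_i \neq t$; here I would rely on the convention implicit in the paper that the safe mapping is compatible with the equivalence part of the precedence, so that $\eqi$ and $\eqis$ agree on terms, reducing the subcase to $s_i \eqis t$.

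For the last claim, suppose $s = c(\pseq[l][m]{s}) \in \TA(\CS,\VS)$ and $s \epostar t$. Admissibility of $\qp$ makes constructors minimal, so there is no $g$ with $c \sp g$, ruling out clause~(\ref{d:epostar:2}). Since constructors have all positions safe, the normal arity of $c$ is $l = 0$ and clause~(\ref{d:epostar:3}) is also unavailable, its required index set $\{1,\ldots,\min(0,k)\}$ being empty. Hence $s \epostar t$ must hold via clause~(\ref{d:epostar:1}), yielding $s_i \epostareq t$ for some safe position~$i$. Because $s_i \in \TA(\CS,\VS)$ as a subterm of a constructor term and $\eqis$ visibly preserves constructor-term status (by minimality of constructors, a symbol $\ep$-equivalent to a constructor is itself a constructor), the induction produces $t \in \TA(\CS,\VS)$.

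I expect the second inclusion to be the principal obstacle: bridging $\eqi$ and $\eqis$ in the subcase $s_i \eqi t$, $s_i \neq t$ is where the proof is actually paying for the added rigidity of $\eqis$, and it is what forces either an explicit compatibility assumption on $\safe$ or a secondary induction cascading the equivalence through matching components. The other two parts are essentially bookkeeping once that point is settled.
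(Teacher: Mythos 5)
Your proof is correct and is, in substance, the argument the paper compresses into ``straightforward consequences of Definition~\ref{d:subepo} and Definition~\ref{d:epostar}''; the structural inductions you set up are the intended ones. You are also right to single out the clash between $\eqi$ (which $\esupertermstrict$ is built from) and the finer $\eqis$ (which is what ${\epostareq}$ and ${\subepostareq}$ admit) as the one genuinely load-bearing point: the inclusion ${\esupertermstrict} \subseteq {\epostar}$ really does fail if some $f \ep g$ carry different safe positions or arities --- then $f(x) \eqi g(x)$, yet $f(x)$ is neither $\eqis$-equivalent nor $\epostar$-above $g(x)$ (case~\ref{d:epostar:3} of Definition~\ref{d:epostar} needs a strict descent in some normal argument, which equivalent arguments cannot supply), so case~\ref{d:epostar:1} cannot fire one level up. The compatibility convention you invoke --- $\ep$-equivalent symbols have equal arities and identical safe argument positions --- is nowhere stated in Section~\ref{s:epostar}, but it is exactly what the constraint $\compatible$ of Section~\ref{s:impl} enforces, so it is the tacit hypothesis the lemma needs; making it explicit is an improvement on the paper, not a defect of your proof. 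One small inaccuracy in the last part: admissibility as defined only demands $f \sp c$ for defined $f$ and constructors $c$, so it excludes $c \sp g$ for \emph{defined} $g$ (otherwise $c \sp g$ and $g \sp c$ would force $c \ep g$, contradicting disjointness of $\sp$ and $\ep$) but does not literally forbid $c \sp d$ between two constructors; hence case~\ref{d:epostar:2} is not ruled out outright. This is harmless, since in that residual situation the root of $t$ is again a constructor, all its argument positions are safe, and the induction hypothesis applied to $s \epostar t_j$ still yields $t \in \TA(\CS,\VS)$.
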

\begin{proof}
  Both properties are straight forward consequences of Definition~\ref{d:subepo} and Definition~\ref{d:epostar}.
  For the second property we require that the precedence $\qp$ is admissible. 
  One easily verifies
  that if $t \not \in \TA(\CS,\VS)$, then $\qp$ is \emph{not} admissible.
\end{proof}

Let $\RS$ be a TRS compatible with some instance $\epostar$.
The idea behind the proof of Theorem \ref{th:epostar:sound} is 
to translate $\irew$-derivations into $\epo[\ell]$-descents 
for some fixed $\ell \in \NAT$ depending only $\RS$. Once this translation is established, 
we can use Theorem \ref{t:epo} to bind the runtime-complexity of $\RS$ appropriately.
For the moment, suppose $\RS$ is completely defined. 
We replace this restriction by constructor TRS later on.
Since $\RS$ is completely defined, normal forms and 
constructor terms coincide, and thus $s \irew t$ if $s = C[l\sigma], t = C[r\sigma]$ for some 
rule ${l \to r} \in \RS$ where additionally $l\sigma \in \Tb$.
Let $t$ be obtained by rewriting a basic term $s$. By the use of $\subepostar$ in Definition 
\ref{d:epostar} every normal argument $t_i$ of $t$ is irreducible, i.e., $t_i \in \TA(\CS,\VS)$. 
We capture this observation in the definition of $\Tn$:
\begin{definition}
  The set $\Tn$ 
  is the least set of terms such that 
  (i) $\TA(\CS,\VS) \subseteq \Tn$, and 
  (ii) if $f \in \FS$, $\vec{s} \subseteq \TA(\CS,\VS)$ and $\vec{t} \subseteq \Tn$
    then $f(\sn{\vec{s}}{\vec{t}}) \in \Tn$.
\end{definition}
Note that $\Tb \subseteq \Tn$. 
\begin{lemma}\label{l:epostar:tnderiv}
  Let $\RS$ be a completely defined TRS compatible with $\epostar$, 
  and let $s \in \Tn$.
  If $s \irew t$ then $t \in \Tn$.
\end{lemma}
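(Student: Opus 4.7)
The plan is to induct on the position of the contracted redex in $s \irew t$. If it sits strictly below the root, write $s = f(\sn{\vec{s}}{\vec{u}})$; the $\Tn$-assumption gives $\vec{s} \subseteq \TA(\CS,\VS)$ and $\vec{u} \subseteq \Tn$. Constructor terms contain no defined symbols and thus no redex, so the reduction occurs inside some safe argument $u_{j_0}$; the induction hypothesis yields $u_{j_0}' \in \Tn$, and clause~(ii) of the definition of $\Tn$ concludes the case.

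The substantive case is a root step $s = l\sigma \irew r\sigma = t$ for some rule $l \to r \in \RS$. Innermost evaluation together with complete definedness (normal forms coincide with constructor terms) forces $x\sigma \in \TA(\CS,\VS)$ for every variable $x$ of $l$, so $\sigma$ is a constructor substitution; moreover, rules whose left-hand side is not basic cannot fire in a completely defined TRS under innermost evaluation, so I may assume $l$ basic and in particular every $l_i \in \TA(\CS,\VS)$. The task reduces to establishing $r\sigma \in \Tn$ from $l \epostar r$. I would achieve this via a simultaneous auxiliary claim, proved by structural induction on $r$: (a)~$l \epostar r$ implies $r\sigma \in \Tn$, and (b)~$l \subepostar r$ implies $r\sigma \in \TA(\CS,\VS)$.

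The base case $r \in \VS$ is immediate from $\sigma$ being a constructor substitution. For $r = g(\sn{\vec{u}}{\vec{v}})$, case-split on the clause used to derive the hypothesis. Clause~(1) of Definition~\ref{d:epostar} gives $l_k \epostareq r$ for some $l_k \in \TA(\CS,\VS)$; closure of $\TA(\CS,\VS)$ under $\epostar$ (Lemma~\ref{l:epostar:help}) and under $\eqis$ (via admissibility of $\qp$, since constructors are only equivalent to constructors) yields $r \in \TA(\CS,\VS)$ and hence $r\sigma \in \Tn$. In clauses~(2) and~(3) each normal argument $u_i$ of $r$ is related to $l$ or to some argument $l_j$ by $\subepostar$ or $\eqis$: combining the induction hypothesis for (b) with Lemma~\ref{l:epostar:help} — which via $\subepostar \subseteq \epostar$ lifts any $l_j \subepostar u_i$ to $u_i \in \TA(\CS,\VS)$ — one obtains $u_i\sigma \in \TA(\CS,\VS)$ in every case. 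The safe arguments $v_j$ satisfy $l \epostar v_j$, so the induction hypothesis for (a) gives $v_j\sigma \in \Tn$, and clause~(ii) of $\Tn$ assembles $r\sigma \in \Tn$. Statement~(b) is analogous, exploiting that $\subepostar$ only descends into normal argument positions of defined symbols, so the recursion bottoms out at a variable or a constructor-rooted subterm of $l$. The main obstacle is the bookkeeping of the simultaneous induction and verifying that $\eqis$-equivalences respect the constructor/defined partition — which is precisely what admissibility of the precedence, encapsulated in Lemma~\ref{l:epostar:help}, supplies.
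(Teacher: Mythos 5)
Your proposal is correct and follows essentially the same route as the paper: reduce to a root step with a constructor substitution and basic left-hand side, then induct over the derivation of $l \epostar r$, using Lemma~\ref{l:epostar:help} to force normal arguments of the right-hand side into $\TA(\CS,\VS)$ and the induction hypothesis for the safe arguments. The only (harmless) differences are organisational: you make the context case an explicit induction on the redex position where the paper appeals directly to the definition of $\Tn$, and you prove $r\sigma \in \Tn$ directly via a simultaneous claim for $\subepostar$ where the paper proves $r \in \Tn$ and then uses closure of $\Tn$ under constructor substitutions.
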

\begin{proof}
  Suppose $s \irew t$ where $s \in \Tn$, i.e., there exists a rule 
  ${l \to r} \in \RS$ such that
  $s = C[l\sigma]$, $t = C[r\sigma]$, and for all direct subterms $l_i$ of $l$, 
  $l_i\sigma \in \NF(\RS)$.
  As $\RS$ is completely defined $\NF(\RS) = \TA(\CS,\VS)$. We conclude
  $l \in \Tb$ and $\ofdom{\sigma}{\VS \to \TA(\CS,\VS)}$.
  Since $s \in \Tn$, it follows that $t \in \Tn$ if $r\sigma \in \Tn$ by definition of $\Tn$
  and the fact $l\sigma \not\in \TA(\CS,\VS)$.
  Note that $\Tn$ is closed under substitutions with image in $\TA(\CS,\VS)$, in 
  particular $r\sigma \in \Tn$ follows if $r \in \Tn$.
  We prove the latter by side induction on $l \epostar r$. 

  If $l_i \epostareq r$ for some direct subterm $l_i$ of $l$ then $r \in \TA(\CS,\VS)$ 
  by Lemma \ref{l:epostar:help} as $l \in \Tb$.
  Next, suppose either Definition \ref{d:epostar}.\ref{d:epostar:2} or Definition \ref{d:epostar}.\ref{d:epostar:3}
  applies.
  Then, by definition, $r=g(\sn{r_1, \dots, r_k}{r_{k+1}, \dots r_{k+n}})$ for some $g \in \FS$.
  For $i \in \{1,\dots,k \}$, 
  $l \subepostar r_i$ follows from 
  Definition \ref{d:epostar}.\ref{d:epostar:2} \emph{and} Definition \ref{d:epostar}.\ref{d:epostar:3}.
  Consequently $r_i \in \TA(\CS,\VS)$ employing $l \in \Tb$ and ${\subepostar} \subseteq {\esupertermstrict}$ (c.f. Lemma~\ref{l:epostar:help}).
  For $i \in \{k+1,\dots,k+n \}$ we observe $l \epostar r_i$. 
  Induction hypothesis yields  $r_i \in \Tn$. 
  We conclude $r \in \Tn$ by definition of $\Tn$.
\end{proof}

We embed $\irew$-steps in $\epo[\ell]$ using \emph{predicative interpretations} $\I$.
Lemma \ref{l:epostar:tnderiv} justifies that we only consider terms from $\Tn$.
For each defined symbol $f$, let $\fn$ be a fresh function symbol, and 
let $\FSn = \{ \fn \mid f \in \DS \} \cup \CS$. Here the arity of $\fn$
is $k$ where $\normal(f) = \{\seq[k]{i}\}$, moreover
$\fn$ is still considered a defined function symbol when applying Definition \ref{d:epo}.
We further extend the (admissible) precedence $\qp$ to $\FSn$ 
in the most obvious way: $\fn \ep \gn$ if $f \ep g$ and $\fn \sp \gn$ if $f \sp g$.
\begin{definition}\label{d:epostar:N}
  A \emph{predicative interpretation} $\I$ is a mapping $\ofdom{\I}{\Tn \to \TAL}$ 
  defined as follows:
  \begin{enumerate}
  \item $\I(t) = \nil$ if $t \in \TA(\CS,\VS)$, and otherwise
  \item $\I(t) = \lst{\fn(\seq[k]{t})} \append \I(t_{k+1}) \append \cdots \append \I(t_{k+n})$ 
  for $t = f(\pseq[k][n]{t})$. 
  \end{enumerate}
\end{definition}

The next lemma provides the embedding of root steps for completely defined, compatible, TRSs $\RS$.
Here we could simply define $\I(t) = \fn(\seq[k]{t})$ in case (ii).
The complete definition becomes only essential when we look at closure under context
in Lemma~\ref{l:epostar:embedctx} below.
\begin{lemma}\label{l:epostar:embedroot}
  Let $s \in \Tb$ and let $\ofdom{\sigma}{\VS \to \TA(\CS,\VS)}$ be a substitution. 
  If $s \epostar t$ then $\I(s\sigma) \epo[\size{t}] \I(t\sigma)$.
\end{lemma}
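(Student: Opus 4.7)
The proof proceeds by structural induction on the derivation $s \epostar t$, writing $s = f(\pseq[l][m]{s})$ throughout. Since $s \in \Tb$, each $s_i\sigma$ lies in $\TA(\CS,\VS)$, so the image under $\I$ of every safe argument is $\nil$ and $\I(s\sigma)$ collapses to the singleton list $\lst{\fn(s_1\sigma,\dots,s_l\sigma)}$. A preliminary observation I would establish first is that for any $u \in \Tn$ the length of the sequence $\I(u)$ equals the number of defined-function-symbol occurrences in $u$, hence is bounded by $\size{u}$; since $\sigma$ maps variables to constructor terms, substitution introduces no defined symbols, so $\I(t\sigma)$ has length at most $\size{t}$. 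That $t\sigma$ lies in $\Tn$ (so that $\I$ is defined on it) follows by the same argument as in Lemma~\ref{l:epostar:tnderiv}, using the $\subepostar$ conditions in Definition~\ref{d:epostar} to force constructor terms into the normal-argument positions of $t$.

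In every case I would first apply Definition~\ref{d:epo}.\ref{d:epo:4} with $m = 1$ and $j = 1$ to reduce the goal $\lst{\fn(s_1\sigma,\dots,s_l\sigma)} \epo[\size{t}] \I(t\sigma)$ to $\fn(s_1\sigma,\dots,s_l\sigma) \epo[\size{t}] \I(t\sigma)$. Case~\ref{d:epostar:1} is immediate: $s_i \epostareq t$ together with $s_i \in \TA(\CS,\VS)$ and Lemma~\ref{l:epostar:help} yield $t \in \TA(\CS,\VS)$, hence $\I(t\sigma) = \nil$, and Definition~\ref{d:epo}.\ref{d:epo:2} with $n = 0$ closes the goal. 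For cases \ref{d:epostar:2} and \ref{d:epostar:3}, writing $t = g(\pseq[k][n]{t})$, the image decomposes as $\I(t\sigma) = \lst{\gn(t_1\sigma,\dots,t_k\sigma)} \append \I(t_{k+1}\sigma) \append \cdots \append \I(t_{k+n}\sigma)$, a list of length $N \leqslant \size{t}$. When $N \geqslant 2$ I apply Definition~\ref{d:epo}.\ref{d:epo:2} to decompose elementwise. The head is dominated by $\fn(s_1\sigma,\dots,s_l\sigma)$ via Definition~\ref{d:epo}.\ref{d:epo:3} in case~\ref{d:epostar:2} (exploiting $\fn \sp \gn$) and via Definition~\ref{d:epo}.\ref{d:epo:5} in case~\ref{d:epostar:3} (exploiting $\fn \ep \gn$ and the indexed argument condition); in both subcases the required strict-superterm relations $\fn(s_1\sigma,\dots,s_l\sigma) \esupertermstrict t_i\sigma$ at the normal positions follow from the $\subepostar$ assumptions by virtue of ${\subepostar} \subseteq {\esupertermstrict}$ from Lemma~\ref{l:epostar:help}.

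The tail elements come from the subsequences $\I(t_{k+j}\sigma)$, where the induction hypothesis applied to $s \epostar t_{k+j}$ yields $\I(s\sigma) \epo[\size{t_{k+j}}] \I(t_{k+j}\sigma)$. Peeling off the singleton on the left via Definition~\ref{d:epo}.\ref{d:epo:4} and decomposing the right-hand side as above extracts $\fn(s_1\sigma,\dots,s_l\sigma) \epo[\size{t_{k+j}}] c$ for each component $c$ of the tail, which lifts to $\epo[\size{t}]$ by monotonicity of $\epo[k]$ in $k$ since $\size{t_{k+j}} < \size{t}$. The main technical nuisance is that Definition~\ref{d:epo}.\ref{d:epo:2} forbids length $n = 1$: whenever $\I(t\sigma)$ or one of the tails $\I(t_{k+j}\sigma)$ happens to collapse to a single term, I must bypass this rule and apply Definition~\ref{d:epo}.\ref{d:epo:3} or \ref{d:epo}.\ref{d:epo:5} directly to that term. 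Keeping the length-based case analysis $N = 0$, $N = 1$, and $N \geqslant 2$ consistent at every level of the recursion is the principal bookkeeping burden of the argument.
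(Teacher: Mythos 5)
Your proposal is correct and follows essentially the same route as the paper's proof: induction on the derivation of $s \epostar t$, collapsing $\I(s\sigma)$ to the single term $\fn(s_1\sigma,\dots,s_l\sigma)$, bounding the length of $\I(t\sigma)$ by $\size{t}$, decomposing the list via Definition~\ref{d:epo}.\ref{d:epo:2}, comparing the heads via Definition~\ref{d:epo}.\ref{d:epo:3} respectively \ref{d:epo}.\ref{d:epo:5} using ${\subepostar} \subseteq {\esupertermstrict}$ from Lemma~\ref{l:epostar:help}, and invoking the induction hypothesis together with monotonicity of $\epo[k]$ in $k$ for the safe arguments. The singleton-list complication you flag is real but is dissolved in the paper by the standing identification of each term $t$ with the list $\lst{t}$, so no separate case analysis on list length is needed there.
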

\begin{proof}
  By the assumptions, 
  $\I(s\sigma) = \lst{\fn(s_1\sigma, \dots, s_l\sigma)} = \fn(s_1\sigma, \dots, s_l\sigma)$ for $f$ the (defined) root symbol of $s$ and 
  normal arguments $s_i$ of $s$.
  If $t \in \TA(\CS,\VS)$ then the lemma trivially follows
  as $\I(t\sigma) = \nil$.
  We prove the remaining cases by induction on the definition of $\epostar$, 
  thus we have $s \epostar t$ either by Definition~\ref{d:epostar}.\ref{d:epostar:2} or 
  Definition~\ref{d:epostar}.\ref{d:epostar:3}.
  Let $t = g(\pseq[k][n]{t})$ and so
  $$\I(t\sigma) = \lst{\gn(t_1\sigma, \dots, t_k\sigma)} 
  \append \I(t_{k+1}\sigma) \append \cdots \append \I(t_{k+n}\sigma) \tpkt$$ 
  Observe that $\I(x\sigma) = \nil$ for all variables $x$ in $t$.
  Using this we see that the length of the list
  $\I(t\sigma)$ is bound by $\size{t}$. 
  Hence by Definition~\ref{d:epo}.\ref{d:epo:2}, 
  it suffices to verify $\I(s\sigma) \epo[\size{t}] \I(t_i\sigma)$ for all safe arguments 
  $t_i$ ($i\in\{k+1,\dots,m\}$), and further
  \begin{equation}
    \label{l:epostar:embedroot:eq1}
    \fn(s_1\sigma, \dots, s_l\sigma) \epo[\size{t}] \gn(t_1\sigma, \dots, t_k\sigma) \tpkt
  \end{equation}
  As we have $s \epostar t_i$ on safe argument $t_i$, 
  the former follow by induction hypothesis on the terms $t_i$.
  It remains to verify \eqref{l:epostar:embedroot:eq1}. 
  We continue by case analysis. 
  \begin{enumerate}
  \item Suppose $f \sp g$, i.e., Definition~\ref{d:epostar}.\ref{d:epostar:2} applies. 
  Then $\fn \sp \gn$ by definition. By Definition~\ref{d:epo}.\ref{d:epo:3} it suffices to prove 
  $\fn(s_1\sigma, \dots, s_l\sigma) \esupertermstrict t_i\sigma$ for all $i \in \{1,\dots,k\}$.
  Fix $i \in \{1,\dots,k\}$.
  According to Definition~\ref{d:epostar}.\ref{d:epostar:2} $s \subepo t_i$ holds, 
  and thus there exists $j \in \{1,\dots,l\}$ such that $s_j \subepostareq t_i$. 
  Hence $s_j \esuperterm t_i$ by Lemma~\ref{l:epostar:help}, 
  from which we conclude $\fn(s_1\sigma, \dots, s_l\sigma) \esupertermstrict t_i\sigma$ 
  since we suppose $\ofdom{\sigma}{\VS \to \TA(\CS,\VS)}$.

  \item   Suppose $\fn \ep \gn$, i.e., Definition~\ref{d:epostar}.\ref{d:epostar:3} applies. 
  By Definition~\ref{d:epo}.\ref{d:epo:5} it suffices to prove
  (i) $s_1\sigma \eqi t_1\sigma$, \dots, $s_{\ell - 1}\sigma \eqi t_{\ell - 1}\sigma$, 
  (ii) $s_\ell\sigma \esupertermstrict t_\ell\sigma$, and further
  (iii) $\fn(s_1\sigma, \dots, s_l\sigma) \esupertermstrict t_{\ell+1}\sigma$, \dots, $\fn(s_1\sigma, \dots, s_l\sigma) \esupertermstrict t_{k}\sigma$
  for some $\ell \in \{1,\dots,k\}$.
  The assumptions in Definition~\ref{d:epostar}.\ref{d:epostar:3}
  yield $s_1 \eqis t_1$, \dots, $s_{\ell-1} \eqis t_{\ell-1}$ from which we conclude (i), 
  further $s_\ell \subepostar t_\ell$ from which we conclude (ii) with the help of Lemma~\ref{l:epostar:help} 
  (using $s_\ell \in \TA(\CS,\VS)$),
  and finally $s \subepostar t_{\ell+1}$, \dots, $s \subepostar t_{k}$ 
  from which we obtain (iii) as in the case above. 
  
  \end{enumerate}
\end{proof}

\begin{lemma}\label{l:epostar:embedctx}
  Let $s,t \in \Tn$ and let $C$ be a context such that $C[s] \in \Tn$.
  If $\I(s) \epo[\ell] \I(t)$ then $\I(C[s]) \epo[\ell] \I(C[t])$.
\end{lemma}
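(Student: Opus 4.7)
The plan is to proceed by structural induction on the context $C$. The base case $C = \hole$ is immediate, as the conclusion coincides with the hypothesis.

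For the inductive step, I would write $C = f(w_1, \ldots, w_m)$ where exactly one $w_i$ contains the hole; call this subcontext $C'$, so $C[s] = f(w_1, \ldots, C'[s], \ldots, w_m)$. I split cases according to whether position $i$ is normal or safe for $f$.

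If $i \in \normal(f)$, then $C[s] \in \Tn$ forces $C'[s] \in \TA(\CS,\VS)$ and hence (since every symbol of $s$ also occurs in $C'[s]$) $s \in \TA(\CS,\VS)$, so $\I(s) = \nil$ by Definition~\ref{d:epostar:N}. Inspection of Definition~\ref{d:epo} shows that $\nil$ has no $\epo[\ell]$-successor (none of the five clauses applies when the outermost list has length $0$), making the hypothesis $\I(s) \epo[\ell] \I(t)$ vacuous; this subcase need not be treated.

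If $i \in \safe(f)$, then $C[s] \in \Tn$ together with $t \in \Tn$ yields $C'[s], C'[t] \in \Tn$, and the induction hypothesis gives $\I(C'[s]) \epo[\ell] \I(C'[t])$. Unfolding Definition~\ref{d:epostar:N} shows that $\I(C[s])$ and $\I(C[t])$ are each a concatenation agreeing block-by-block except for the single block $\I(C'[s])$ versus $\I(C'[t])$. An application of Lemma~\ref{permlem} then lifts this single-block descent to the whole concatenation, giving $\I(C[s]) \epo[\ell] \I(C[t])$ and closing the induction.

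The crux is Lemma~\ref{permlem}, which supplies precisely the closure of $\epo[\ell]$ under concatenation that is needed. I expect the main friction to be bookkeeping in the normal-position subcase: verifying that $\nil$ admits no $\epo[\ell]$-successor (so the subcase is genuinely vacuous), and propagating $\Tn$-membership through the subcontexts so that both Definition~\ref{d:epostar:N} and the induction hypothesis are applicable on $C'[s]$ and $C'[t]$.
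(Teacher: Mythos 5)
Your proof is correct and follows essentially the same route as the paper: induction on $C$, the observation that $\I(s) \epo[\ell] \I(t)$ rules out $s \in \TA(\CS,\VS)$ (so the hole must sit at a safe position), the block decomposition of $\I(C[u])$, and Lemma~\ref{permlem} to lift the descent. The only difference is presentational — you case-split on the hole position and dismiss the normal case as vacuous, whereas the paper derives directly that the hole is safe — which is the same argument.
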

\begin{proof}
We show the lemma by induction on $C$.
It suffices to consider the step case.
Observe that by the assumption $\I(s) \epo[\ell] \I(t)$, $s \not \in \TA(\CS,\VS)$
since otherwise $\I(s) = \nil$ is $\epo[\ell]$-minimal.
We can thus assume
$
C = f(\sn{\seq[k]{s}}{s_{k+1}, \dots, C'[\hole], \dots s_{k+l}})
$
for some context $C'$ by definition of $\Tn$.
Thus, for each $u \in \{ s, t \}$,
$$
\I(C[u]) = \lst{\fn(\seq[k]{s})} 
           \append \I(s_{k+1}) \append \cdots \append \I(C'[u]) \append \cdots \append \I(s_{k+l}) \tpkt
$$
By induction hypothesis $\I(C'[s]) \epo[\ell] \I(C'[t])$. We conclude
using Lemma \ref{permlem}.
\end{proof}

\begin{lemma}\label{l:epostar:embed}
  Let $\RS$ be a completely defined TRS compatible with $\epostar$. Let $s \in \Tn$.
  If $s \irew t$ then $\I(s) \epo[\ell] N(t)$ where $\ell \defi \max\{ \size{r} \mid {l \to r} \in \RS\}$.
\end{lemma}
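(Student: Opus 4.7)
The plan is to reduce the statement to the two preceding embedding lemmas. Unfold the innermost step $s \irew t$ as $s = C[l\sigma]$, $t = C[r\sigma]$ for some rule $l \to r \in \RS$ and substitution $\sigma$, where all direct subterms of $l\sigma$ are in normal form. Since $\RS$ is completely defined, $\NF(\RS) = \TA(\CS,\VS)$, and hence $l \in \Tb$ together with $\sigma \colon \VS \to \TA(\CS,\VS)$. In particular $l\sigma \in \Tb \subseteq \Tn$, and Lemma~\ref{l:epostar:tnderiv} applied to the root step $l\sigma \irew r\sigma$ gives $r\sigma \in \Tn$ as well.

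From compatibility of $\RS$ with $\epostar$ we obtain $l \epostar r$. Lemma~\ref{l:epostar:embedroot} then yields $\I(l\sigma) \epo[\size{r}] \I(r\sigma)$. Using the observation recorded after Definition~\ref{d:epo} that ${\epo[m]} \subseteq {\epo[k]}$ whenever $m \leqslant k$, and recalling $\size{r} \leqslant \ell$ by the choice of $\ell$, we upgrade this to
\[
  \I(l\sigma) \epo[\ell] \I(r\sigma) \tpkt
\]

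It remains to propagate the descent through the surrounding context $C$. Since $l\sigma, r\sigma \in \Tn$ and $C[l\sigma] = s \in \Tn$ by hypothesis, Lemma~\ref{l:epostar:embedctx} applies and gives
\[
  \I(s) = \I(C[l\sigma]) \epo[\ell] \I(C[r\sigma]) = \I(t) \tkom
\]
which is the claimed inequality.

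The proof is essentially a bookkeeping argument: the real content is hidden in Lemmas~\ref{l:epostar:embedroot} and~\ref{l:epostar:embedctx}. The only delicate point is to justify that the index $\size{r}$ produced by the root-step embedding can be uniformly replaced by $\ell$; this is where monotonicity of $\epo[k]$ in $k$ is used. Everything else (well-definedness of $\I$ on $s$ and $t$, innermost form of $\sigma$, basicness of $l$) is a direct consequence of the completely-defined constructor assumption.
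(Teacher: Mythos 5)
Your proof is correct and follows essentially the same route as the paper's: unfold the innermost step, use the completely-defined assumption to get $l \in \Tb$ and $\sigma\colon\VS\to\TA(\CS,\VS)$, apply Lemma~\ref{l:epostar:embedroot} together with monotonicity of $\epo[k]$ in $k$, and close under context via Lemma~\ref{l:epostar:embedctx}. Your extra remark that $r\sigma \in \Tn$ (via Lemma~\ref{l:epostar:tnderiv}) is a welcome explicit check of the hypotheses of the context lemma that the paper leaves implicit.
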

\begin{proof}
  Suppose $s \irew t$. Hence there exists a context $C$, substitution 
  $\sigma$ and rule ${l \to r} \in \RS$ such that 
  $s = C[l\sigma]$ and $t = C[r\sigma]$. By the assumption that $\RS$ is completely defined, 
  $l \in \Tb$ and $\ofdom{\sigma}{\VS \to \TA(\CS,\VS)}$. 
  Since $\RS \subseteq {\epostar}$, we obtain $\I(l\sigma) \epo[\ell] \I(r\sigma)$ by Lemma~\ref{l:epostar:embedroot}
  (additionally employing ${\epo[\size{r}]} \subseteq {\epo[\ell]}$).
  Lemma~\ref{l:epostar:embedctx} then establishes $\I(s) \epo[\ell] \I(t)$.
\end{proof}

\begin{theorem}\label{t:epostar:main:cd}
  Let $\RS$ be a completely defined, possibly infinite, TRS compatible with $\epostar$. 
  Suppose $\ell \defi \max\{ \size{r} \mid {l \to r} \in \RS\}$ is well-defined.
  There exists $k \in \NAT$ such that $\rcRi(n) \leqslant 2^{\bigO(n^k)}$.
\end{theorem}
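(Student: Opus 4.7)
\emph{Proof plan.} The plan is to lift Theorem~\ref{t:epo} through the predicative interpretation $\I$: embed an innermost rewrite sequence from a basic starting term into a $\epo[k]$-descent of its image under $\I$, and then apply the exponential bound on $\Slow[k]$ from Theorem~\ref{t:epo}.

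First I would fix a single parameter $k \in \NAT$ large enough to dominate both $\ell$ (so that ${\epo[\ell]} \subseteq {\epo[k]}$, using the monotonicity noted right after Definition~\ref{d:epo}) and the maximum arity appearing in $\FS$ (so that the hypothesis $n \le k$ of Theorem~\ref{t:epo} is uniformly available). For a possibly infinite TRS this implicitly requires the signature to be of bounded arity, which is natural in the setting where $\ell$ is well-defined. Given $s \in \Tb$ with $\size{s} \le n$, Lemma~\ref{l:epostar:tnderiv} guarantees that every reduct along a maximal innermost derivation from $s$ remains in $\Tn$, and iterating Lemma~\ref{l:epostar:embed} then turns the derivation into a strict descent $\I(s) \epo[k] \I(t_1) \epo[k] \I(t_2) \epo[k] \cdots$. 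Directly from the definition of $\Slow[k]$ we obtain
$$\dl(s, \irew) \le \Slow[k](\I(s)).$$

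It remains to bound $\Slow[k](\I(s))$ exponentially in $n$. Writing $s = f(s_1,\ldots,s_m;\, s_{m+1},\ldots,s_{m+p})$ with each $s_i \in \TA(\CS,\VS)$, Definition~\ref{d:epostar:N} gives $\I(s) = \lst{\fn(s_1,\ldots,s_m)}$, and Lemma~\ref{lemG_k}(ii)--(iii) yields $\Slow[k](\I(s)) = \Slow[k](\fn(s_1,\ldots,s_m))$ as well as $\Slow[k](s_i) = \depth(s_i) \le n$ for each $i$. Applying Theorem~\ref{t:epo} with $N \le n+1$ then gives
$$\Slow[k](\I(s)) \le (k+1)^{N^k \cdot \rk(\fn) + \sum_{i=1}^m N^{k-i}\, \Slow[k](s_i)} = 2^{O(n^k)},$$
the hidden constants depending only on $\RS$. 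Maximising over all $s \in \Tb$ with $\size{s} \le n$ yields $\rcRi(n) \le 2^{O(n^k)}$, as required.

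The only real obstacle is bookkeeping: Theorem~\ref{t:epo} demands that the arity $m$ of the head symbol of $\I(s)$ satisfies $m \le k$, while Lemma~\ref{l:epostar:embed} supplies descents only at level $\ell$. Both constraints must hold simultaneously, which is why $k$ must be chosen to dominate $\ell$ \emph{and} the maximum arity in $\FS$; once $k$ is fixed in this way, the rest of the proof is a direct assembly of the lemmas already developed, with no new combinatorial content.
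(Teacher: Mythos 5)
Your proposal is correct and follows essentially the same route as the paper: Lemmas~\ref{l:epostar:tnderiv} and \ref{l:epostar:embed} convert the innermost derivation into an $\epo$-descent of the predicative interpretation, so that $\dl(s,\irew) \leqslant \Slow(\I(s))$, and then Theorem~\ref{t:epo} together with Lemma~\ref{lemG_k} (giving $N \leqslant \size{s}$) bounds $\Slow(\I(s))$ by $2^{\bigO(\size{s}^k)}$. Your only deviation is to enlarge the level parameter from $\ell$ to a $k$ that also dominates the maximal arity so that the hypothesis $n \leqslant k$ of Theorem~\ref{t:epo} is explicitly satisfied (using ${\epo[\ell]} \subseteq {\epo[k]}$); the paper applies Theorem~\ref{t:epo} directly at level $\ell$ and leaves this side condition implicit, so your bookkeeping is, if anything, slightly more careful.
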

\begin{proof}
  We prove the existence of $c_1,c_2 \in \NAT$ so that for any 
  $s \in \Tb$, $\dl(s, \irew) \leqslant 2^{c_1 \cdot \size{s}^{c_2}}$.
  Consider some maximal derivation
  $s = t_0 \irew t_1 \irew \cdots \irew t_{n}$.
  Let $i \in \{0, \dots, n-1\}$.
  We observed $t_i \in \Tn$ in Lemma~\ref{l:epostar:tnderiv},
  and thus $\I(t_i) \epo[\ell] \I(t_{i+i})$ due to Lemma~\ref{l:epostar:embed}.
  So in particular $\dl(s,\irew) \leqslant \Slow[\ell](\I(s))$.
  We estimate $\Slow[\ell](\I(s))$ in terms of $\size{s}$:
  for this, suppose $s = f(\pseq{s})$ for some $f \in \DS$ and $s_i \in \TA(\CS,\VS)$ ($i \in \{1, \dots, k + l\}$).
  By definition $\I(s) = \fn(\seq[k]{s})$.
  Set $N \defi \max \{ \Slow[\ell] ( s_i ) \mid 1 \leqslant i \leqslant k \} +1$, and
  verify

  \begin{equation}
    \label{eq:epostar:main:cd:1}
    N 
    \leqslant 1 + \sum_{ i=1 }^k \Slow[\ell](s_i)
    \leqslant 1 + \sum_{ i=1 }^k \depth(s_i) 
    \leqslant \size{s}\tpkt
  \end{equation}

  For the second inequality we employ Lemma~\ref{lemG_k}, which gives
  $\Slow[\ell](s_i) = \depth(s_i)$ as $s_i \in \TA(\CS,\VS)$ for all
  $i \in \{1, \dots, k\}$.
  Applying Theorem~\ref{t:epo} we see

  \begin{align*}
    \Slow[\ell](\I(s)) & = \Slow[\ell](\fn(\seq[k]{s})) \\
    & \leqslant (\ell+1)^{N^\ell \cdot \rk(\fn) + \sum_{ i=1 }^k N^{ \ell-i } \cdot \Slow[\ell] ( s_i ) } & (\text{by Theorem~\ref{t:epo}}) \\
    & \leqslant (\ell+1)^{\size{s}^\ell \cdot \rk(\fn) + \size{s}^\ell \cdot \sum_{ i=1 }^k \Slow[\ell](s_i) } & (\text{by Equation~\ref{eq:epostar:main:cd:1}})\\
    & \leqslant (\ell+1)^{\size{s}^\ell \cdot \rk(\fn) + \size{s}^{\ell} \cdot \size{s}}  & (\text{by Equation~\ref{eq:epostar:main:cd:1}})\\
    & \leqslant (\ell+1)^{(\rk(\fn) + 1) \cdot \size{s}^{\ell+1}} \tpkt
  \end{align*}

  Since $\ell$ depends only on $\RS$, and $\rk(\fn)$ is bounded by some constant depending only on $\FS$,
  simple arithmetical reasoning gives the
  constants $c_1,c_2$ such that 
  $\dl(s,\irew) \leqslant \Slow[\ell](\I(s)) \leqslant 2^{c_1 \cdot \size{s}^{c_2}}$. This concludes the Theorem.
\end{proof}

We now lift the restriction that $\RS$ is completely defined for constructor TRSs $\RS$.
The idea is to extend $\RS$ with sufficiently many rules 
so that the resulting system is completely defined and Theorem~\ref{t:epostar:main:cd} applicable.
\begin{definition}\label{d:epostar:rss}
  Let $\bot$ be a fresh constructor symbol and $\RS$ a TRS.
  We define $\RSS \defi \{t \to \bot \mid t \in \TA(\FS \cup \{\bot\},\VS) \cap \NF(\RS) \text{ and the root symbol of $t$ is defined} \}$.
\end{definition}
We extend the precedence $\qp$ to $\FS \cup \{\bot\}$ so that $\bot$ is minimal.
Thus $\RSS \subseteq {\epostar}$ follows by one application of 
Definition \ref{d:epostar}.\ref{d:epostar:2}.
Further, the \emph{completely defined} TRS $\RS \cup \RSS$ is able to 
simulate $\irew$ derivations for constructor TRS $\RS$:

\begin{lemma}\label{l:epostar:simul}
  Suppose $\RS$ is a constructor TRS. Then $\RS \cup \RSS$ is completely defined.
  Further, if $s \irsl[\RS]{\ell} t$ then $s \irsl[\RS \cup \RSS]{\ell'} t'$ for some $t'$ and $\ell' \geqslant \ell$.
\end{lemma}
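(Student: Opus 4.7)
The plan is to handle the two claims separately. For complete definedness of $\RS \cup \RSS$, I would first characterise $\NF(\RS \cup \RSS)$: if a normal form $t$ had a subterm $u$ with defined root, then either $u$ is not an $\RS$-normal form (providing an $\RS$-redex inside $t$, contradicting $t \in \NF(\RS \cup \RSS)$), or $u \in \NF(\RS)$. In the latter case $u$ has defined root, uses only symbols from $\FS \cup \{\bot\}$, and is an $\RS$-normal form, hence $u \to \bot$ belongs to $\RSS$, again contradicting normality. Consequently $\NF(\RS \cup \RSS) \subseteq \TA(\CS \cup \{\bot\},\VS)$, so no defined symbol occurs in any $\RS \cup \RSS$-normal form, which is precisely complete definedness.

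For the simulation claim, the strategy is induction on $\ell$ with the strengthened invariant that there is an innermost $\RS \cup \RSS$-derivation $s \irsl[\RS \cup \RSS]{\ell'} t'$ with $\ell' \geqslant \ell$ such that $t$ reduces to $t'$ by a (possibly empty) sequence of $\RSS$-steps. The base case $\ell = 0$ is immediate. For the step, suppose $s \irsl[\RS]{\ell} u$ and $u \irew v$ in $\RS$, with the last step at position $p$ via rule $l \to r$ and matching substitution $\sigma$. Applying the induction hypothesis yields $s \irsl[\RS \cup \RSS]{\ell'} u'$ with $u$ reaching $u'$ by $\RSS$-steps. I would then observe that the subterm at position $p$ in $u'$ has the form $l\sigma'$ for some substitution $\sigma'$ with each $\sigma(x)$ reducing to $\sigma'(x)$ by $\RSS$-steps. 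The key point is that $l\sigma$ and all its superterms in $u$ lie outside $\NF(\RS)$ (they contain the $\RS$-redex), so no $\RSS$-rule can fire at or above $p$ on the way from $u$ to $u'$; all $\RSS$-reductions are confined either strictly off the redex or inside the substituted arguments.

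To complete the step, I would $\RSS$-normalise below $p$ in $u'$ by innermost bottom-up reductions, replacing every maximal defined-rooted subterm of each $\sigma'(x)$ with $\bot$, producing a substitution $\sigma^\bot$ with $\sigma^\bot(x) \in \TA(\CS \cup \{\bot\},\VS)$ and an intermediate term $u''$. Since $\RS$ is a constructor TRS, $l = f(l_1,\dots,l_n)$ with each $l_i \in \TA(\CS,\VS)$, so matching $l$ against the transformed redex at $p$ still succeeds with substitution $\sigma^\bot$; applying $l \to r$ yields a term $v'$. This last step is innermost in $\RS \cup \RSS$ since every argument $l_i\sigma^\bot$ lies in $\TA(\CS \cup \{\bot\},\VS) = \NF(\RS \cup \RSS)$. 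Thus the simulation adds at least one new step, bringing the total to $\geqslant \ell + 1$. The invariant is restored because outside $p$ the contexts agree (up to the $\RSS$-reductions inherited from the hypothesis), and at $p$ the term $r\sigma^\bot$ is reachable from $r\sigma$ by $\RSS$-reducing each substituted occurrence of the $\sigma(x)$'s.

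The main technical obstacle I expect is justifying that the subterm at $p$ in $u'$ really retains the shape $l\sigma'$, i.e., that no $\RSS$-step along $u$ to $u'$ collapsed the redex or one of its ancestors to $\bot$. This rests on the simple but essential observation that $\RSS$-rules only rewrite $\RS$-normal forms, so the presence of the $\RS$-redex $l\sigma$ forbids any $\RSS$-step at or above position $p$. The constructor-TRS hypothesis is then precisely what guarantees that matching persists under the modified substitution $\sigma^\bot$, closing the argument.
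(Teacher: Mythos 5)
Your overall route is the same as the paper's: the paper also proves the first claim ``by definition'' and establishes the simulation by showing that an innermost $\RS$-step $l\sigma \irew r\sigma$ can be replayed in $\RS \cup \RSS$ after first collapsing, via $\RSS$, the defined-rooted parts inside the matched arguments (the paper phrases this through the $\RSS$-normalisation map $\normalise{\cdot}$ and the identity $\normalise{(l\sigma)} = l\sigma_{\normalise{}} \irew[\RS\cup\RSS] r\sigma_{\normalise{}} \irss[\RS\cup\RSS] \normalise{(r\sigma)}$, deferring details to the cited thesis). Your induction with the invariant ``$t$ reaches $t'$ by $\RSS$-steps'' is the unpacked, position-level version of that argument, and your first part is a fine elaboration of ``follows by definition''.

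There is, however, one step whose justification as stated is wrong: the claim that ``$\RSS$-rules only rewrite $\RS$-normal forms, so the presence of the $\RS$-redex $l\sigma$ forbids any $\RSS$-step at or above position $p$.'' The left-hand sides of $\RSS$ are $\RS$-normal forms, but they may contain variables, and an \emph{instance} of an $\RS$-normal form need not be an $\RS$-normal form. For example, if $\RS = \{\m{f}(0) \to 0\}$ then $\m{f}(x) \in \NF(\RS)$, so $\m{f}(x) \to \bot \in \RSS$, and this rule fires on the $\RS$-redex $\m{f}(0)$. Hence an $\RSS$-step \emph{can} occur at the redex position, and your semantic argument does not exclude it. (The same phenomenon shows that the paper's identity $\normalise{(l\sigma)} = l\sigma_{\normalise{}}$ needs care if read literally.) The repair is available but must be made explicit: the derivation from $u$ to $u'$ is one \emph{you} construct over the induction, so instead of arguing that no $\RSS$-step at or above $p$ is \emph{possible}, strengthen the invariant to record that the $\RSS$-steps relating $t$ to $t'$ occur only inside the matched (normal-form) arguments of earlier redexes — i.e., choose the simulating derivation so that it never collapses a pattern position — and then verify that this syntactic invariant is preserved when you pass from $r\sigma$ to $r\sigma^{\bot}$. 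Without that strengthening, the key step ``the subterm at $p$ in $u'$ retains the shape $l\sigma'$'' is not established.
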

\begin{proof}
  That $\RS \cup \RSS$ is completely defined follows by definition. 
  We outline the proof of the second statement.
  For a complete proof we kindly refer the reader to \cite[Section 5.1]{MA09}.
  Let $\normalise{t}$ denote the unique normal form of $t \in \TA(\FS \cup \{\bot\},\VS)$
  with respect to $\RSS$ (observe that $\RSS$ is confluent and terminating by definition).
  One verifies that for ${l \to r} \in \RS$, $\ofdom{\sigma}{\VS \to \NF(\RS)}$
  and $\sigma_{\normalise{}} \defi \{x \mapsto \normalise{u} \mid \sigma(x) = u \}$, 
  \begin{align}
    \label{l:epostar:simul:eq1}
    \normalise{(l\sigma)} = l{\sigma_{\normalise{}}} \irew[\RS \cup \RSS] r{\sigma_{\normalise{}}} \irss[\RS \cup \RSS] \normalise{(r\sigma)} \tpkt  
  \end{align}
  Using equation \eqref{l:epostar:simul:eq1}, we obtain 
  $\normalise{s} \irst[\RS \cup \RSS] \normalise{t}$ from $s \irew[\RS] t$
  by a straight forward inductive argument.
  It is not difficult to see that from this we can conclude the lemma.
\end{proof}

An immediate consequence of Lemma~\ref{l:epostar:simul}
is $\rcRi[\RS](n) \leqslant \rcRi[\RS \cup \RSS](n)$, i.e., 
the innermost runtime-complexity of $\RS$ can be analysed 
through $\RS \cup \RSS$.
We arrive at the proof of our main theorem:
\begin{proof}[Proof of Theorem~\ref{th:epostar:sound}]
  Suppose $\RS$ is a constructor TRS compatible with $\epostar$. We verify that
  $\rcRi(n)$ is bounded by an exponential $2^{\bigO(n^k)}$ for some fixed $k \in \NAT$:
  let $\RSS$ be defined according to Definition~\ref{d:epostar:rss}.
  By Lemma~\ref{l:epostar:simul}, ${\RS \cup \RSS}$ is completely defined, 
  and moreover, $\rcRi[\RS](n) \leqslant \rcRi[\RS \cup \RSS](n)$.
  Clearly $\max\{\size{r} \mid {l \to r} \in \RSS \} = 1$, since 
  $\RS$ is finite we have that $\max\{\size{r} \mid {l \to r} \in {\RS \cup \RSS} \}$ is well-defined.
  Further ${(\RS \cup \RSS)} \subseteq {\epostar}$ follows by the assumption on $\RS$
  and definition of $\RSS$.
  Hence all assumptions of Theorem~\ref{t:epostar:main:cd} are fulfilled, 
  and we conclude $\rcRi[\RS](n) \leqslant \rcRi[\RS \cup \RSS](n) \leqslant 2^{\bigO(n^k)}$ 
  for some $k \in \NAT$.
\end{proof}


\subsection{Completeness}\label{s:completeness}

To prove Theorem~\ref{th:epostar:complete}, 
we use the characterisation of the exponential time 
computable functions given in \cite{AraiE09} by Arai and the second author, 
and, the resulting \emph{term rewriting characterisation} 
given in \cite{Eguchi09}.
We closely follow the presentation of \cite{AraiE09,Eguchi09}, 
for further motivation of the presented notions we kindly 
refer the reader to \cite{AraiE09,Eguchi09}.

In the spirit of \cite{BC92}, the class $\NC$ (of functions over binary words) from \cite{AraiE09}
relies on a syntactic separation of argument positions into \emph{normal} and \emph{safe} ones.
To highlight this separation, we write $f(\vec{x};\vec{y})$ instead of 
$f(\vec{x},\vec{y})$ for normal arguments $\vec{x}$ and safe arguments $\vec{y}$. 
The class $\NC$ is defined as the least class
containing certain initial functions and that is closed under 
the scheme of
\emph{(weak) safe composition}
\begin{equation}\label{eq:pcompm} \tag{$\m{WSC}$}
  f(\vec{x};\vec{y}) = h(x_{i_1},\dots,x_{i_k}; \vec{s}(\vec{x};\vec{y})) \tkom
\end{equation}
and \emph{safe nested recursion on notation}
\begin{equation} \label{eq:snrs} \tag{$\m{SNRN}$}
  \begin{array}{r@{~}l}
      f(\vec{\varepsilon},\vec{x};\vec{y}) & =  g(\vec{x};\vec{y}) \\
      f(\vec{z},\vec{x};\vec{y}) & = 
      h_{\tau(\vec{z})}(\vec{v_1},\vec{x};\vec{y},f(\vec{v_1},\vec{x};\vec{t}_{\tau(\vec{z})}(\vec{v_2}, \vec{x}; \vec{y},f(\vec{v_2},\vec{x};\vec{y}))))
  \end{array}
\end{equation}
where $\vec{z} \not=\vec{\varepsilon}$.
The Scheme~\eqref{eq:pcompm} reflects that the exponential 
time functions are \emph{not} closed under composition.
We have presented the Scheme~\eqref{eq:snrs} with two 
nested recursive calls for brevity, however \cite{AraiE09}
allows an arbitrary (but fixed) number of nestings.
Note that here recursion
is performed simultaneously on multiple arguments $\vec{z}$. 
The functions $h_{\tau(\vec{z})}$ and $\vec{t}_{\tau(\vec{z})}$ are 
previously defined functions, chosen in terms of $\tau(\vec{z}) \in \Sigma^k_0$.
Here $k$ equals the length 
of $\vec{z}$, and $\Sigma^k_0 \defi \{0,1,\varepsilon \}^k \setminus \{ \varepsilon \}^k$.
Further, $\vec{v_1}$ and $\vec{v_2}$ are unique predecessors
of $\vec{z}$ defined in terms of $\tau(\vec{z})$.
In \cite{AraiE09} it is proved that $\NC$ coincides with $\FEXP$.

The term rewriting characterisation from \cite{Eguchi09} expresses
the definition of $\NC$ as an \emph{infinite} rewrite system $\RN$,
depicted below.
Here binary words are formed from the constructor symbols $\varepsilon$, $\mS_0$
and $\mS_1$. For notational reasons we use $\mS_\varepsilon(;z)$ to 
denote $\varepsilon$.
The function symbols $\ZEROP{k}{l}, \PROJP{k}{l}{r}, \PREDP, \CONDP$
correspond to the initial functions of $\NC$. 
The symbol $\SUBP[g,\seq[k]{i},\vec{h}]$ is used to denote the
function obtained by composing functions $g$ and $\vec{h}$ according to
the Scheme~\eqref{eq:pcompm}. 
Finally, the function symbol
$\SNRN[g,h_{w},\vec{s_{w}},\vec{t_{w}}\,(w \in \Sigma^k_0)]$ corresponds
to the function defined by safe nested recursion on notation from 
$g$, $h_w$, $\vec{s_w}$, $\vec{t_w}$ $(w \in \Sigma^k_0)$ in accordance
to Scheme~\eqref{eq:snrs}.  
We highlight the separation of safe and normal argument positions
directly in the rules. The TRS $\RN$ consists of the rules
$$
\begin{array}{l@{\qquad}l}
  \ZEROP{k}{l}(\vec{x};\vec{y}) \to \varepsilon 
  & \PREDP(;\varepsilon) \to \varepsilon \\
  \TOP \PROJP{k}{l}{r}(\vec{x};\vec{y}) \to x_r \text{ for $r \in \{1,\dots,k \}$} 
  & \PREDP(;\mS_i(;x)) \to x \\
  \TOP \PROJP{k}{l}{r}(\vec{x};\vec{y}) \to y_{r-k} \text{ for $r \in \{k+1, \dots, l+k\}$} 
  & \CONDP(;\varepsilon,y_0,y_1) \to y_0 \\
  \TOP \SUBP[g,\seq[k]{i},\vec{h}](\vec{x};\vec{y}) \to g(x_{i_1},\dots,x_{i_k};\vec{h}(\vec{x};\vec{y})) 
  & \CONDP(;\mS_i(;x),y_0,y_1) \to y_i\\
  \multicolumn{2}{l}{
    \TOP \SNRN[g,h_{w},\vec{s_{w}},\vec{t_{w}}\,(w \in \Sigma^k_0)](\vec{\varepsilon}, \vec{x};\vec{y}) \to g(\vec{x};\vec{y})
  } \\
  \multicolumn{2}{l}{  
    \TOP \SNRN[g,h_{w},\vec{s_{w}},\vec{t_{w}}\,(w \in \Sigma^k_0)](\mS_{i_1}(;z_1), \dots, \mS_{i_k}(;z_k), \vec{x};\vec{y}) \to {}
  }\\
  \multicolumn{2}{l}{  
    \TOP \qquad h_{i_1 \cdots i_k}(\vec{v_1},\vec{x};\vec{y}, \SNRN[g,h_{w},\vec{s_{w}},\vec{t_{w}}\,(w \in \Sigma^k_0)](\vec{v_1},\vec{x};\vec{a}))
  }\\
  \multicolumn{2}{l}{  
    \TOP \qquad\quad [\vec{s}_{i_1 \cdots i_k}(\vec{v_2},\vec{x};\vec{y}, \SNRN[g,h_{w},\vec{s_{w}},\vec{t_{w}}\,(w \in \Sigma^k_0)](\vec{v_2},\vec{x};\vec{b})) \slash \vec{a}]
  } \\
  \multicolumn{2}{l}{  
    \TOP \qquad\quad [\vec{t}_{i_1 \cdots i_k}(\vec{v_3},\vec{x};\vec{y}, \SNRN[g,h_{w},\vec{s_{w}},\vec{t_{w}}\,(w \in \Sigma^k_0)](\vec{v_3},\vec{x};\vec{y})) \slash \vec{b}]
  } \\
  \multicolumn{2}{l}{  
    \TOP \hspace{5.7cm} \text{where $\{ \varepsilon \}^k \neq \{ i_j \mid 1 \leqslant j \leqslant k\} \subseteq \{\varepsilon, 0,1\}^k$.}
  }
\end{array}
$$
Abbreviate $\vec{u} = u_1,\dots,u_k = \mS_{i_1}(;z_1), \dots, \mS_{i_k}(;z_k)$, 
and consider for some $j \in \{1,2,3\}$ arguments  $\vec{v_j} = {v_1, \dots, v_k}$.
The arguments $\vec{v_j}$ are \emph{$\succ$-predecessors} \cite{AraiE09} 
of $\vec{u}$.
This gives some $i \in \{1, \dots,  k\}$ such that
(i) $u_1 = v_1, \dots, u_{i-1} = v_{i-1}$, (ii) $u_i \subepostar v_{i}$
and (iii) $u_{l_{i+1}} \subepostareq v_{i+1}, \dots, u_{l_{k}} \subepostareq v_{k}$
for some $l_{i+1}, \dots, l_k \in \{1,\dots,k\}$.

By the results from \cite{Eguchi09}, it follows
that for each function $f$ from $\FEXP$ there exists a \emph{finite restriction} $\RS_f$ of 
$\RN$ which computes the function $f$.
Hence to prove Theorem~\ref{th:epostar:complete}, 
it suffices to orient each finite restriction of $\RN$ by
an instance of $\EPOSTAR$.

\begin{proof}[Proof of Theorem~\ref{th:epostar:complete}]
  Consider some arbitrary function $f \in \FEXP$  
  and the corresponding TRS $\RS_f \subseteq \RN$ computing $f$.
  Let $\FS$ be the signature consisting of function symbols appearing in
  $\RS_f$.
  For function symbols $g,h \in \FS$, we define 
  $g \sp h$ in the precedence iff $\lh(g) > \lh(h)$, where
  \begin{enumerate}
  \item $\lh(g) := 1$ for $g \in \{\ZEROP{k}{l}, \PROJP{k}{l}{r}, \PREDP, \CONDP \}$, 
  \item $\lh(\SUBP[g,\seq[k]{i},\vec{h}]) \defi \max\{\lh(g),\lh(\vec{h})\} + 1$, and
  \item $\lh(\SNRN[g,h_{w},\vec{s_{w}},\vec{t_{w}}\,(w \in \Sigma^k_0)]) \defi
    \max\{\lh(g),\lh(h_w),\lh(\vec{s_{w}}),\lh(\vec{t_{w}}) \mid w \in \Sigma^k_0\} + 1$.
  \end{enumerate}
  Further, define the safe mapping $\safe$ as indicated by the system $\RN$.
  Then it can be shown that $R_f \subseteq {\epostar}$ for $\epostar$
  induced by $\sp$.
  We only consider the most interesting case, the 
  orientation of the final rule.
  For brevity, we only consider two level of nestings. 
  The argument can be easily extended to the general case.
  Abbreviate $\SNRN[g,h_{w},\vec{s_{w}},\vec{t_{w}}\,(w \in \Sigma^k_0)]$ as $\m{f}$.
  We show 
  \begin{multline*}
    u \defi \m{f}(\mS_{i_1}(;z_1), \dots, \mS_{i_k}(;z_k), \vec{x};\vec{y})  \epostar \\
    h_{i_1 \cdots i_k} (\vec{v_1},\vec{x};\vec{y},
    \m{f}(\vec{v_1},\vec{x};\vec{t_w}(\vec{v_2},\vec{x};\vec{y}, \m{f}(\vec{v_2},\vec{x};\vec{y})))).
  \end{multline*}

  By Definition \ref{d:epostar}.\ref{d:epostar:1}, we obtain
  $u \epostar y_i$ for $y_i \in \vec{y}$. Further 
  Definition \ref{d:subepo} gives $u \subepostar x_i$ for $x_i \in \vec{x}$.
  Thus by Definition \ref{d:epostar}.\ref{d:epostar:3} and the 
  observation below the system $\RN$
  we conclude  
  $u \epostar \m{f}(\vec{v_2},\vec{x};\vec{y})$.
  In particular, the observations on $\vec{v_2}$ also give $u \subepostar v_j$ for $v_1,\dots,v_k = \vec{v_2}$.
  By Definition \ref{d:epostar}.\ref{d:epostar:2} we see
  $
  u \epostar t_w(\vec{v_2},\vec{x};\vec{y}, \m{f}(\vec{v_2},\vec{x};\vec{y}))
  $, 
  by  Definition \ref{d:epostar}.\ref{d:epostar:3} we obtain
  $$
  u \epostar \m{f}(\vec{v_1},\vec{x};\vec{t_w}(\vec{v_2},\vec{x};\vec{y}, \m{f}(\vec{v_2},\vec{x};\vec{y}))) \tpkt
  $$ 
  We conclude with a final application of Definition \ref{d:epostar}.\ref{d:epostar:2}.
\end{proof}


\section{Implementation}\label{s:impl}

We reduce the problem of finding an instance $\epostar$ such that $\RS \subseteq {\epostar}$ holds 
to the \emph{Boolean satisfiability problem} $\SAT$. 
To simplify the presentation, we extend language of propositional 
logic with truth-constants $\top$ and $\bot$ in the obvious way.
To encode the (admissible) precedence $\qp$, we introduce for $f,g \in \DS$ 
propositional variables $\esp{f}{g}$ and $\eep{f}{g}$ to encode the strict and equivalence part
of $\qp$.
We use the standard approach \cite{SFTGACMZ:07} to assert that those variables encode a 
quasi-precedence on $\DS$. Recall that constructors are minimal in the precedence. 
To simplify notation we set
for $f \not\in \DS$ or $g\not\in\DS$
\begin{equation*}
  \esp{f}{g} \defi 
  \begin{cases}
    \top & \text{if $f\in\DS$ and $g\in \CS$,} \\
    \bot & \text{otherwise.}
  \end{cases}
  \quad
  \eep{f}{g} \defi 
  \begin{cases}
    \top & \text{if $f\in\CS$ and $g\in \CS$, } \\
    \bot & \text{otherwise.}
  \end{cases}
\end{equation*}

Further, to encode whether $i \in \safe(f)$ 
we use the variables $\esafe{f}{i}$ for 
$i \in \{1,\dots,n \}$ and $n$-ary $f \in \DS$.
Recall that arguments positions of constructors are always safe. 
We set $\esafe{f}{i} \defi \top$ for $n$-ary $f \in \CS$ and $i \in \{1,\dots,n\}$. 
To increase the strength of our implementation, 
we orient the system $\mu(\RS)$ obtained from $\RS$ by permuting arguments according
to a fixed permutation per function symbol, expressed by mappings
$\ofdom{\mu_f}{\{1,\dots,n\} \to \{1,\dots,n\}}$ for $n$-ary $f \in \FS$. 
The mapping is lifted to terms in the obvious way:
\begin{equation*}
  \mu(t) \defi 
  \begin{cases}
    t & \text{if $t \in \VS$} \\
    f(t_{t_{\mu_f(1)}}, \dots, t_{\mu_f(n)}) & \text{if $t = f(\seq{t})$}.
  \end{cases}
\end{equation*}
We set $\mu(\RS) \defi \{ \mu(l) \to \mu(r) \mid {l \to r} \in \RS \}$.
It is easy to see that $\mu$ does not change derivation heights, in particular, 
$\rcRi[\RS] = \rcRi[\mu(\RS)]$.
To encode the mapping $\mu_f$ for $n$-ary $f \in \FS$ 
we use propositional variables $\emu{f}{i}{k}$ for $i,k \in \{1,\dots,n\}$. 
The meaning of $\emu{f}{i}{k}$ is that argument position
$i$ of $f$ should be considered as argument position $k$, i.e., $\mu(i) = k$, 
compare also \cite{SFTGACMZ:07}. 
We require that those variables encode a permutation on argument positions, which 
is straight forward to formulate in propositional logic. 

To ensure a consistent use of safe argument positions in the constraints below, 
we require that if $f \ep g$, then their arities match and further, 
safe argument positions coincide as expressed 
by the constraint
\begin{align}
  \label{e:impl:precconstraint}
  \compatible \defi \bigand_{f,g \in \FS} \eep{f}{g} \imp 
  \bigand_{i=1}^n \bigand_{j=1}^n \bigand_{k=1}^n \emu{f}{i}{k} \wedge \emu{g}{j}{k} \imp (\esafe{f}{i} \iff \esafe{g}{j}) \tpkt
\end{align}
Here $n$ denotes the arity of $f$ and $g$.

Let $s, t \in \TERMS$ be two concrete terms.
We encode $s \eqi t$ (respecting the argument permutation $\mu$) as 
the constraint $\enc{s \eqi t}$ defined as follows:
\begin{align*}
  \enc{s \eqi t} \defi 
  \begin{cases}
    \top & \text{if $s = t$,} \\
    \eep{f}{g} \wedge {\bigand_{i=1}^{n} \bigand_{j=1}^{n} \bigand_{k=1}^{n}
    \emu{f}{i}{k} \wedge \emu{f}{j}{k} \imp \enc{s_i \eqi t_j}} & \text{if ($\star$),} \\
    \bot & \text{otherwise.}
  \end{cases}
\end{align*}
Here ($\star$) denotes $s=f(\seq{s})$ and $t=g(\seq{t})$.
The comparison $s \subepostar t$ is expressed by
\begin{align*}
  \enc{f(\seq{s}) \subepostar t} \defi 
    \bigor_{i=1}^n c_i \wedge (\enc{s_i \subepostar t} \vee \enc{s_i \eqi t})
\end{align*}
where $c_i = \top$ if $f \in \CS$ and $c_i = \neg \esafe{f}{i}$ if $f \in \DS$.
For $s \in \VS$  we set $\enc{s \subepostar t} \defi \bot$. 
Next we consider the comparison $s \epostar t$, and set 
$$
\enc{s \epostar t} \defi \enc{s \epostarc{1} t} \vee \enc{s \epostarc{2,3} t} \tpkt
$$
Here $\enc{s \epostarc{1} t}$ is the encoding of Case \ref{d:epostar:1}, 
$\enc{s \epostarc{2,3} t}$ expresses Case \ref{d:epostar:2} and Case \ref{d:epostar:3}
from Definition~\ref{d:epostar}.
The constraint $\enc{s \epostarc{1} t}$ is expressed similar to above:
\begin{align*}
  \enc{f(\seq{s}) \epostarc{1} t} \defi 
    \bigor_{i=1}^n \enc{s_i \epostar t} \vee \enc{s_i \eqi t}\tkom
\end{align*}
and $\enc{s \subepostar t} \defi \bot$ for $s \in \VS$.

Let $s = f(\pseq[l][m]{s})$, $t = g(\pseq[k][n]{t})$, and reconsider 
Definition~\ref{d:epostar}.\ref{d:epostar:2}
and Definition~\ref{d:epostar}.\ref{d:epostar:3}.
In both cases we require $s \epostar t_j$ for safe argument positions $j \in \{k+1,\dots,k+n\}$. 
If $f \sp g$, additionally $s \subepostar t_j$ has to hold for all normal argument positions
$j \in \{1,\dots,k\}$. On the other hand, if $f \ep g$, then we need to check 
the \emph{stronger} statement
(i) $s_1 \eqis t_1, \dots, s_{i-1} \eqis t_{i-1}$, 
(ii) $s_i \subepostar t_i$
and (iii) $s \subepostar t_{i+1}, \dots, s \subepostar t_{k}$
for some $i \in \{1, \dots, \min(l,k)\}$. 
Note here that (i) and (ii) (and trivially (iii)) imply $s \subepostar t_i$. 
We encode conditions (i) and (ii) in the constraint 
$\enc{s \epostarlexk[1] t}$
defined below. Then Definition~\ref{d:epostar}.\ref{d:epostar:2}
and Definition~\ref{d:epostar}.\ref{d:epostar:3} is expressible by the constraint
\begin{multline*}
  \enc{f(\seq{s}) \epostarc{2,3} g(\seq[m]{t})} \defi 
  \bigl(\esp{f}{g} \vee \eep{f}{g} \wedge \enc{s \epostarlexk[1] t} \bigr) \\ 
  \wedge \bigand_{j=1}^m (\esafe{g}{j} \imp \enc{s \epostar t_j}) \wedge (\neg \esafe{g}{j} \imp \enc{s \subepostar t_j}) \tpkt
\end{multline*}
For the remaining cases, we set $\enc{s \epostarc{2,3} t} \defi \bot$. 
Further, we set for $k \in \{1,\dots,n \}$
\begin{multline*}
  \enc{s \epostarlexk[k] t} \defi 
  \bigand_{i=1}^{n} \bigand_{j=1}^{n} \bigand_{k=1}^{n} (\emu{f}{i}{k} \wedge \emu{g}{j}{k}
  \imp (\esafe{f}{i} \imp \enc{s \epostarlexk[k+1] t}) \\
  \wedge (\neg \esafe{f}{j} \to (\enc{s_i \subepo t_j} \vee (\enc{s_i \eqis t_j} \wedge \enc{s\epostarlexk[k+1] t})))
\end{multline*}
and $\enc{s \epostarlexk[k] t} \defi \bot$ for $k > n$.

Finally, compatibility of the TRS $\mu(\RS)$ is expressible as the constraint
\begin{align*}
  \epoconstraint \defi \compatible \wedge \precedence \wedge \bijection \wedge \bigand_{{l \to r} \in \RS} \enc{l \epostar r} \tkom
\end{align*}
where $\compatible$ is as defined in Equation \ref{e:impl:precconstraint}, 
$\precedence$ asserts a correct encoding of the admissible quasi-precedence $\qp$
and $\bijection$ asserts that $\mu_f$ for $f \in \FS$ indeed correspond to bijections on argument positions.

\begin{proposition}
  Let $\RS$ be a TRS such that the constraint $\epoconstraint$ is satisfiable. 
  Then $\mu(\RS) \subseteq {\epostar}$ for some argument permutation $\mu$ and exponential path order
$\epostar$. 
\end{proposition}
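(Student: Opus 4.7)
The plan is a standard soundness argument for a SAT encoding: starting from a satisfying assignment $\alpha \models \epoconstraint$, I extract the semantic data (precedence, safe mapping, argument permutation) and then verify by induction that each sub-encoding $\enc{\cdot}$ implies the corresponding order relation on the permuted terms.

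First I would read off from $\alpha$ the following objects: a quasi-precedence $\qp$ on $\FS$ by setting $f \sp g$ iff $\alpha(\esp{f}{g}) = \top$ and $f \ep g$ iff $\alpha(\eep{f}{g}) = \top$; a safe mapping $\safe$ by $i \in \safe(f)$ iff $\alpha(\esafe{f}{i}) = \top$; and for each $n$-ary $f \in \FS$ a map $\mu_f \colon \{1,\dots,n\} \to \{1,\dots,n\}$ with $\mu_f(i) = k$ iff $\alpha(\emu{f}{i}{k}) = \top$. The auxiliary constraints $\precedence$ and $\bijection$ ensure that $\qp$ is a well-formed admissible quasi-precedence (constructors minimal, $\sp$ well-founded) and that each $\mu_f$ is a genuine bijection, so that the induced map $\mu$ on terms is well-defined. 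The constraint $\compatible$ ensures the side condition on $\safe$ implicitly assumed by $\eqis$, namely that equivalent symbols have matching safe/normal separations \emph{after} the permutation has been applied.

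Next I would prove by simultaneous induction on the sum of term sizes $\size{s} + \size{t}$ the three implications
\begin{align*}
\alpha \models \enc{s \eqi t}       & \implies \mu(s) \eqis \mu(t), \\
\alpha \models \enc{s \subepostar t} & \implies \mu(s) \subepostar \mu(t), \\
\alpha \models \enc{s \epostar t}    & \implies \mu(s) \epostar \mu(t),
\end{align*}
for the instances of $\eqis$, $\subepostar$ and $\epostar$ induced by $\qp$ and $\safe$. In each case I would open the definition of the encoding, unfold one satisfied disjunct, and read it off as an application of the corresponding clause in Definition~\ref{d:subepo} or Definition~\ref{d:epostar}. The crucial bookkeeping concerns the indices: the conjunctions of the form $\bigwedge_{i,j,k} \emu{f}{i}{k} \wedge \emu{g}{j}{k} \imp (\cdots)$ express precisely a comparison of the $k$-th argument of $\mu(f(\seq{s}))$ with the $k$-th argument of $\mu(g(\seq{t}))$, since the permutation relocates the original $i$-th (resp.\ $j$-th) argument to position $k$. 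In particular the lexicographic encoding $\enc{s \epostarlexk[k] t}$ unwinds position by position along the permuted arguments and hence corresponds exactly to clause \ref{d:epostar:3} of Definition~\ref{d:epostar}.

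Having established the three implications, compatibility follows immediately: the conjunct $\bigwedge_{{l \to r} \in \RS} \enc{l \epostar r}$ of $\epoconstraint$ yields $\mu(l) \epostar \mu(r)$ for every rewrite rule, i.e.\ $\mu(\RS) \subseteq {\epostar}$. The main obstacle I anticipate is the careful handling of $\mu$ in the case $f \ep g$: one must check that the $\compatible$-constraint indeed forces $\safe(f)$ and $\safe(g)$ to agree \emph{after} permutation, so that the recursive call $\enc{s_i \eqi t_j}$ (gated by $\emu{f}{i}{k} \wedge \emu{g}{j}{k}$) legitimately witnesses a clause of the inductive definition of $\eqis$; once this alignment is in place, the remaining bookkeeping is a routine case analysis matching each encoding clause to its defining clause.
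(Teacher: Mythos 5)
The paper states this proposition without any proof, so there is nothing to diverge from: your plan supplies exactly the standard soundness argument the authors evidently intend (decode the assignment into $\qp$, $\safe$ and $\mu$, then show by induction that each satisfied sub-encoding witnesses the corresponding clause of Definitions~\ref{d:subepo} and~\ref{d:epostar} on the permuted terms). Your outline is correct, including the one genuinely delicate point — that $\compatible$ aligns safe positions of $\ep$-equivalent symbols after permutation so the recursive $\enc{s_i \eqi t_j}$ calls really witness $\eqis$.
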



\section{Conclusion}\label{s:conclusion}

In this paper we  present the \emph{exponential path order} \EPOSTAR.  
Suppose a term rewrite system $\RS$ is  compatible  with \EPOSTAR,  
then  the  runtime  complexity of $\RS$ is bounded from  above by an exponential  function. 
Further, \EPOSTAR\ is sound and complete for the class of functions computable in
exponential time on  a Turing machine. We have implemented
\EPOSTAR\ in the complexity tool \tct.%
\footnote{See \url{http://cl-informatik.uibk.ac.at/software/tct/}, the experimental
data for our implementation is available 
here:~\url{http://cl-informatik.uibk.ac.at/software/tct/experiments/epostar}.}
\tct\ can automatically prove exponential runtime complexity of
our motivating example $\RSfib$. Due to Theorem~\ref{t:soundness} we thus
obtain through an automatic analysis that the computation of the
Fibonacci number is exponential.


\bibliographystyle{plain}

\end{document}